\title{Explaining Practical Differences Between Treatment Effect Estimators with High Dimensional Asymptotics}
\author{Steve Yadlowsky\\\texttt{yadlowsky@google.com}\\Google Research, Brain Team}
\date{\today}
\begin{document}

\maketitle

\begin{abstract}
    We revisit the classical causal inference problem of estimating the average treatment effect in the presence of fully observed confounding variables using two-stage semiparametric methods. In existing theoretical studies of methods such as G-computation, inverse propensity weighting (IPW), and two common doubly robust estimators---augmented IPW (AIPW) and targeted maximum likelihood estimation (TMLE)---they are either bias-dominated, or have similar asymptotic statistical properties. However, when applied to real datasets, they often appear to have notably different variance. We compare these methods when using a machine learning (ML) model to estimate the nuisance parameters of the semiparametric model, and highlight some of the important differences. When the outcome model estimates have little bias, which is common among some key ML models, G-computation and the TMLE outperforms the other estimators in both bias and variance.
    
    We show that the differences can be explained using high-dimensional statistical theory, where the number of confounders $d$ is of the same order as the sample size $n$. To make this theoretical problem tractable, we posit a generalized linear model for the effect of the confounders on the treatment assignment and outcomes. Despite making parametric assumptions, this setting is a useful surrogate for some machine learning methods used to adjust for confounding in two-stage semiparametric methods. In particular, the estimation of the first stage adds variance that does not vanish, forcing us to confront terms in the asymptotic expansion that normally are brushed aside as finite sample defects. However, our model emphasizes differences in performance between these estimators beyond first-order asymptotics.
\end{abstract}

\section{Introduction}
We investigate and compare the statistical properties of two-stage estimators of the average treatment effect (ATE), $\tau = \E[Y(1) - Y(0)]$, of a binary treatment $W \in \{0, 1\}$ on a scalar outcome $Y \in \R$, in the presence of fully observed, confounding variables $X \in \R^d$ that may affect both treatment assignment and the outcome. For this class of treatment effect estimation problems, we compare a number of commonly used estimators for treatment effects, including G-computation \citep{Robins86}, inverse probability weighting, and two common doubly robust estimators---augmented inverse probability weighting \citep{bang2005doublerobust} and targeted maximum likelihood estimation \citep{van2006targeted}. Theoretical analyses have shown that when each of these is paired with an appropriate nonparametric or machine learning estimator of the relevant nuisance parameters in the semiparametric model, it can achieve the same (optimal) semiparametric efficiency bound \cite{Imbens04,van2006targeted,ChernozhukovChDeDuHaNeRo16}. That is, in the large sample limit, they all have equivalent statistical properties. Additionally, analyses show that the doubly robust estimators share similar statistical properties in sparse, high dimensional linear models for the covariates, when the LASSO or related sparsity-inducing machine learning methods are used for nuisance parameter estimation \cite{BelloniChHa14}, but G-computation is dominated by bias due to heavy regularization of the nuisance parameter models.

Yet, empirical evaluations of such methods suggest that this theoretical similarity does not translate to practice.
Take, for instance, the results of the Atlantic Causal Inference Conference's Data Analysis Challenge from 2019. Numerous semiparametric methods were implemented in a variety of forms, including G-computation, TMLE, and AIPW. While all of these methods have fairly low mean-squared error (MSE), they do not all have exactly the same performance. While the semiparametric methods are modular, so that all of them can be used with the same models for nuisance parameter estimation, these were not directly compared for most methods in the ACIC challenge. We begin our comparison by developing a realistic simulation model on which to evaluate these methods in a modular way, and identify some concrete differences not predicted by the aforementioned asymptotic theory.

This suggests that theoreticians need to study these methods under different modeling assumptions for the data, that correctly predicts these practical differences. Here, we propose that assuming that the observed confounders are in a high dimensional regime is a better model. By high dimensions, we mean that the number of confounders $d$ is of the same order as the sample size $n$, so that $d/n \to \kappa \in (0, 1)$. To make the problem tractable, we posit the (strong) assumption that all nuisance parameters, specifying the effect of the confounders on the treatment assignment and outcomes, follow an appropriate generalized linear model,
\begin{align}
    &\E[W \mid X=x] = h^{-1}(\eta^\top x), \label{eq:prop-structural}\\
    &\E[Y(1) \mid X=x] = g^{-1}(\beta_1^\top x),~\text{and} \label{eq:treated-structural}\\
    &\E[Y(0) \mid X=x] = g^{-1}(\beta_0^\top x). \label{eq:control-structural}
\end{align}
In contrast to existing literature on average treatment effect estimation in high dimensions \cite{AtheyImWa18,WangHeXu20,ChernozhukovChDeDuHaNeRo16}, we do not assume that the effect of the confounders on the outcome or treatment is sparse. Instead, we only require that the magnitude of the nuisance parameters remains non-degenerate. That is, we study sequences such that $\eta^\top X = \Theta_P(1)$ as $n,d \to \infty$, and similarly for $\beta_1$ and $\beta_0$. Doing so makes the problem tractable, avoiding the issue of vanishing overlap between treated and control individuals in covariate space raised by \citet{DAmour20}. 

Despite making parametric assumptions, the high dimensionality makes this setting have some important similarities with using machine learning methods to adjust for confounding. In particular, the estimation of the nuisance parameters involves over-fitting the observed data, which introduces excess variance in the nuisance parameter estimates, modulated by the aspect ratio $\kappa$, that does not vanish asymptotically. In essence, this forces us to confront terms related to the variance of the nuisance parameter estimates that normally are brushed aside as finite sample defects. In many settings, ML estimators work best with hyperparameters tuned to have small bias, at the cost of higher variance. Choosing a parametrically well-specified model with high variance reflects these qualitative properties while enabling tractable mathematical analysis. This perspective on asymptotics is used in the machine learning literature to understand important phenomena about machine learning methods poorly explained by traditional asymptotic or finite sample analyses \citep{MontenariRuSoYa19,WangMuTh21,MontenariZh20,TriperaneniAdPe21}.

In Section~\ref{sec:linear}, we perform a theoretical analysis of these estimators using a linear outcome model (i.e. $h$ and $g$ are both the identity link function) to illustrate some of the advantages and disadvantages of each estimator that stand out in this high-dimensional setting. In Section~\ref{sec:logistic}, we discuss extensions to a logistic regression outcome model. In this model, overfitting during nuisance parameter estimation due to the high dimensional covariates induces variance \emph{and} bias in the nuisance parameter estimates that must be corrected. We show that recalibrating the logistic regression estimates to have good calibration does \emph{not} properly correct for this bias, but that using the rescaling proposed by \citet{SurCa2019} and implemented using SLOE \citep{YadlowskyYuMcDA21} does correct for this bias. 

In Sections~\ref{sec:linear-experiments} and \ref{sec:logistic-experiments} on linear and logistic regression models, respectively, we compare many of the estimators discussed to one another via simulation, and compare them to oracle estimators with known nuisance parameters to emphasize the differences created by high dimensional nuisance parameter estimation. %

In comparing two-stage estimators for the ATE in this high dimensional setting, we find that when the outcome model estimates are unbiased, the G-computation estimator works very well, outperforming all of the other estimators in terms of bias and variance. However, when using biased estimates (such as those adjusted by post-hoc Platt scaling) of the nuisance parameters, the bias of G-computation can be much larger than doubly-robust estimators. Among the doubly robust estimators, the TMLE estimator has the lowest variance. These results match the phenomena uncovered in our simulation of methods when using ML models for nuisance parameter estimation.

\subsection{Related Work}
This work draws on, and is closely related to, the work on estimation and inference of the parameters of high-dimensional linear \cite{ElKarouiBeBiLiYu2013, DonohoMo2016, DobribanWa2018, CattaneoJaNe18a}, logistic \cite{SurCa2019}, and partially linear \cite{CattaneoJaNe18b} regression. It is different from these works in that we do not emphasize estimation or inference of the parameters of the parametric model, but rather the average treatment effect functional. In particular, this allows the generality of interaction terms between the treatment and confounders in linear models, and allows estimation in logistic regression models of a parameter that can be interpreted directly, without issues of baselining effects \cite{robins1997toward}. In the context of the extensive literature on large sample theory for ATE estimation, this allows us to compare ATE estimators from a fresh perspective, under which we will see that many estimators with similar large sample properties, including the G-computation, TMLE, and AIPW estimators, have different properties from one another in the high-dimensional asymptotic regime.

This work is closely related to the work by \citet{CattaneoJa18}, which studies semi-parametric inference with undersmoothed kernel estimators of the nuisance parameters. As noted in that paper, as well as  Enno Mammen's discussion of \citet{CattaneoCrJa13}, undersmoothed kernel estimators share some similarities to robust inference in high dimensions. However, in these related works, the conditions ensure consistency of the nuisance parameter estimates. This would correspond in high dimensions to assuming that $d/n = o(1)$, albeit possibly slowly. Other work such as \citet{RobinsTcLiVa09} study semiparametric functionals such as the ATE when the nuisance parameters are not smooth enough for a nonparametric estimator to achieve fast enough rates of convergence. This leads to slower than $\sqrt{n}$ rates of convergence for the ATE estimate. The high-dimensional setting we study challenges standard asymptotic results differently, because we demonstrate estimators achieving the $\sqrt{n}$ rate of convergence, but these estimators have higher variance than predicted by large sample asymptotic theory. \citet{CattaneoJaMa19} study this high dimensional regime, providing interesting methods for estimation and inference; building on this work, we emphasize how this regime can be used to compare and contrast a variety of two-stage semiparametric methods for treatment effect estimation.

\section{Preliminaries}
\label{sec:prelim}
We begin by giving a more detailed description of each of the semiparametric methods that we study in this work, and review some of the existing asymptotic theory for these estimators.

\subsection{G-computation}
Consider the following procedure for estimating the ATE. Apply data-splitting by randomly splitting the data into two pieces, with $\mathcal{I}_1$ the set of indices in the first split and $\mathcal{I}_2$ the indices in the second. Use the data in $\mathcal{I}_1$ to fit $\what{\mu}_w(x)$ estimating $\mu_w(x) = \E[Y \mid W=1, X=x]$. Similarly, fit $\what{\mu}_0(x)$ estimating $\E[Y \mid W=0, X=x]$. Then, use the G-computation estimator to estimate the average treatment effect on the individuals in $\mathcal{I}_2$, by imputing the potential outcomes:
\begin{equation}
    \what{\theta} = \frac{1}{n_2} \sum_{i \in \mathcal{I}_2} \what{\mu}_1(X_i) - \what{\mu}_0(X_i).
    \label{eq:imputation}
\end{equation}

This method for estimating the average treatment effect is anything but new \cite{Robins86}. However, existing literature focuses on models where the nuisance parameter estimates converge, asymptotically, to the true nuisance parameters. That is, some condition like $\|\what{\mu}_w(\cdot) - \mu_w(\cdot)\|^2 = \E[ ((\what{\beta}_w - \beta_w)^\top X)^2 \mid \what{\beta}_w ] = o_P(1)$ (or perhaps a stronger rate). However, as we will show in Section~\ref{sec:linear-g-est-high-theory}, this will not happen in our high dimensional regime.

\subsection{IPW Estimator}
Applying the same general technique as above to fit $\what{\pi}(x)$ for estimating the the propensity score $\pi(x) = P(W=1 \mid X=x)$, and using inverse propensity weighting (IPW), we can estimate the treatment effect as
\begin{equation*}
    \frac{1}{n_2} \sum_{i \in \mathcal{I}_2} \frac{W_i Y_i}{\what{\pi}(X_i)} - \frac{(1-W_i)Y_i}{1-\what{\pi}(X_i)}.
\end{equation*}
While theoretical analyses show that this method can estimate the treatment effect efficiently \citep{HiranoImRe03}, it suffers from more theoretical and practical issues than the others discussed. First of all, in both the large sample and high-dimensional asymptotic regimes, the IPW estimator has higher variance than G-computation, as it does not remove reducible variance in the outcomes. Therefore, the IPW estimator is typically inefficient. In the large sample asymptotic regime, nonparametric modeling of the propensity score can eliminate this excess variance \citep{HiranoImRe03}, although such nonparametric modeling is not feasible in high dimensions.

This method has additional estimation issues in the high dimensional regime studied here. As we will lay out in Section~\ref{sec:logistic}, the maximum likelihood estimate (MLE) of $\what{\eta}$ is biased when $d \asymp n$ (see \citet{SurCa2019} for a detailed treatment of this setting). Second of all, because the propensity score sits in the denominator of the IPW estimating equation, variance in the estimated propensity score will create bias in the estimating equation. Because the variance does not disappear asymptotically, this bias will remain non-negligible for large $n$. In our more detailed discussion of estimating nuisance parameters that follow a logistic regression model in Section~\ref{sec:logistic}, we show how to correct for both of these biases under the logistic regression model for the propensity score.

\subsection{Augmented IPW Estimator}
\label{sec:linear-aipw}
The celebrated doubly-robust augmented IPW estimator (AIPW) combines ideas from G-computation and the IPW estimator,
\begin{equation}
    \what{\theta}_{\aipw{}} = \frac{1}{n} \sum_{i=1}^n \what{\mu}_1(X_i) - \what{\mu}_0(X_i) + \frac{W_i - \what{\pi}(X_i)}{\what{\pi}(X_i)(1 - \what{\pi}(X_i))}(Y_i - \what{\mu}_{W_i}(X_i).
    \label{eq:aipw}
\end{equation}
Using sample splitting to fit the nuisance parameters for this estimator, allows us to use the product of biases of the nuisance parameters to bound the estimator's bias. Let $\overline{\mu}_w(x) = \E[\what{\mu}_w(x)]$ and $\overline{\pi^{-1}}(x) = \E[\frac{1}{\what{\pi}(x)}]$. Then,
\begin{align}
    |\E[\what{\theta}_{\aipw{}}] - \theta| &= |\E[ (\overline{\mu}_w(X) - \mu_w(X))(1 - \pi(X)\overline{\pi^{-1}}(X))]| \\
    &\le \E[ (\overline{\mu}_w(X) - \mu_w(X))^2 ] \E[(1 - \pi(X)\overline{\pi^{-1}}(X))^2].
\end{align}
This has two well-known implications. Firstly, if a well-specified parametric model is used to fit either the outcome imputation model or the propensity score model, then the resulting AIPW estimator will be asymptotically unbiased, i.e. robust in spite of the potential mispecification of the other nuisance parameter; this is why the estimator is called doubly-robust. Secondly, if $\max_w \|\what{\mu}_w(\cdot) - \mu_w(\cdot)\|_{2, P}\|\what{\pi}(\cdot) - \pi(\cdot)\|_{2, P} = o_P(n^{-1/2})$ and $\what{\pi}(\cdot)$ is uniformly bounded away from 0 or 1, then the estimator will be asymptotically unbiased, which justifies the use of many nonparametric or machine learning estimators of the nuisance parameters under the appropriate structural assumptions in conjunction with the AIPW estimator \cite{ChernozhukovChDeDuHaNeRo16}.

When $\mu_w(\cdot)$ and $\pi(\cdot)$ are modeled nonparametrically (or with a well-specified parametric model), the estimator $\what{\theta}_{\aipw{}}$ will be approximately (or exactly) unbiased, and so the key to understanding it's convergence to $\theta$ is to understand it's variance. When $d$ is constant, $\what{\theta}_{\aipw{}}$ is asymptotically equivalent to the oracle AIPW,
\begin{equation}
    \what{\theta}_{\oracle} = \frac{1}{n} \sum_{i=1}^n \mu_1(X_i) - \mu_0(X_i) + \frac{W_i - {\pi}(X_i)}{{\pi}(X_i)(1 - \pi(X_i))}(Y_i - \mu_{W_i}(X_i)),
    \label{eq:aipw-oracle-linear}
\end{equation}
which has variance
\begin{equation*}
    \frac{1}{n}\left( \var((\mu_1(X) - \mu_0(X)) + \E\left[ \frac{\var(Y(1) \mid X)}{e(X)} + \frac{\var(Y(0) \mid X)}{1-e(X)} \right]\right),
\end{equation*}
and achieves the semiparametric efficiency bound \cite{hahn98}. Under sufficient smoothness and regularity conditions, the G-computation and AIPW estimators with nonparametric nuisance parameter estimates will have nearly identical large sample behavior. However, when $\mu_w(\cdot)$ is well specified by a linear model, G-computation will generally have lower variance than the AIPW estimator \citep{AtheyImWa18}, except under special conditions on the propensity score such as in a completely randomized experiment ($\pi(x) = 0.5$).

\subsection{Targeted Maximum Likelihood Estimation}
\label{sec:tmle-linear}
The targeted maximum likelihood estimator (TMLE), like the AIPW, is a doubly robust estimator of the ATE. Under many settings, theoretical analyses show that the TMLE and AIPW estimators are asymptotically equivalent, sharing the same asymptotic influence function, however empirical findings suggest differences between the two estimators in finite sample performance.

Let us revisit the TMLE of the ATE with continuous outcomes in our notation, using a sample split variant to simplify the analysis. Using the first split, $\mathcal{I}_1$, estimate the nuisance parameters $\what{\beta}_0, \what{\beta}_1,$ and $\what{\eta}$. Let $\what{\mu}_1(x), \what{\mu}_0(x),$ and $\what{\pi}(x)$ be the estimated nuisance parameter predictions. Then, on second split, $\mathcal{I}_2$, fit $\what{\epsilon}_0$ and $\what{\epsilon}_1$ as the MMSE regression estimates of the fluctuations $\frac{W}{\what{\pi}(X)}$ and $\frac{1-W}{1-\what{\pi}(X)}$, respectively, on the outcome residuals $Y - \what{\mu}_W(X)$,
\begin{equation*}
    \what{\epsilon}_0, \what{\epsilon}_1 = \argmin_{\epsilon_0,\epsilon_1} \frac{1}{n_2} \sum_{i \in \mathcal{I}_2} \left(Y_i - \what{\mu}_{W_i}(X_i) + \epsilon_1 \frac{W_i}{\what{\pi}(X_i)} + \epsilon_0 \frac{1-W_i}{1-\what{\pi}(X_i)}\right)^2.
\end{equation*}
Finally, using the updated regression estimates $\overline{\mu}_w(x) = \what{\mu}_w(x) + \what{\epsilon}_1 \frac{w}{\what{\pi}(x)} + \what{\epsilon}_0 \frac{1-w}{1-\what{\pi}(x)} $, estimate the TMLE as
\begin{equation}
    \what{\tau}_{\tmle{}} = \frac{1}{n_2} \sum_{i \in \mathcal{I}_2}  \overline{\mu}_1(X_i) - \overline{\mu}_0(X_i).
\end{equation}

\section{Comparing Methods using ML}
\label{sec:ml-sim}
Here, we perform a realistic simulation study comparing the semiparametric methods discussed in Section~\ref{sec:prelim}, and conclude that there are notable differences in their performance in practice, which are not explained by existing large-sample theory.
In many real-world applications of these methods, the nuisance parameters need to be flexibly estimated, and the accuracy of estimating these nuisance parameters is paramount ensuring accurate treatment effect estimates \citep{DorieHiShScCe19}. Recent advances in machine learning technology provide a wide range of methods that can be applied for this purpose. An advantage of the semiparametric methods discussed in Section~\ref{sec:prelim} over other bespoke methods is that they are modular, so that the best nuisance parameter estimator for the data can be used with any of the methods. Unfortunately, the results on standard benchmarks in causal inference, such as the ACIC Data Challenges, only compare some of these methods side-by-side.

In this section, we directly compare the semiparametric methods on simulated data using the same ML nuisance parameter estimates. To make the results of this comparison as relevant to practice as possible, we chose a popular machine learning method that performs well on a wide variety of datasets. Then, we chose one such dataset as the basis for a semi-synthetic simulation, designed to ensure that the nuisance parameter estimation error is realistic.

We generated 500 datasets using this simulation, estimated the nuisance parameters using cross fitting on each dataset, and estimated the treatment effect using all of the semiparametric methods described in Section~\ref{sec:prelim} with these shared nuisance parameter estimates. We then summarized the bias, standard error, and root-MSE (RMSE) of each, reported below in Section~\ref{sec:ml-sim-results}.

\subsection{Machine Learning Method Selection}
\label{sec:ml-sim-ml}
Gradient boosted decision trees and random forests both provide good prediction performance on a wide variety of publicly available machine learning datasets \cite{ShwartzZivAr22,FernandezCeBaAm14}. In this work, we use \texttt{xgboost} \citep{ChenGu16} for all nuisance parameter estimates, as it has good prediction accuracy and is computationally efficient \citep{ShwartzZivAr22}. \texttt{xgboost} has many hyperparameters that need to be tuned to get good performance. For our experiments, we fit nuisance parameter estimates on all simulations with a variety of randomly sampled hyperparameter combinations, and found the choice that optimizes the average MSE of the treatment effect estimates for the G-estimation, AIPW, and TMLE methods. This corresponds to an (approximate) oracle selection of hyperparameters for the overall treatment effect estimation problem. Additionally, we performed cross-validation while estimating the nuisance parameters, and chose the hyperparameters that minimize the prediction loss for the corresponding learning task. We report results in Section~\ref{sec:ml-sim-results} for both approaches.

\subsection{Simulation Design}
\label{sec:ml-sim-design}
In choosing a simulation to perform the comparison of treatment effect estimation methods, a few properties are important to make the experiment relevant to practice, which we outline below. We are unaware of any established and publicly available simulations with these properties, so we opted to create our own. As noted in \citet{DorieHiShScCe19}, a critical feature of practical treatment effect estimation methods is the use of a flexible and accurate outcome model. The G-computation, AIPW and TMLE methods all use outcome models, albeit in different ways. Therefore, an important aspect of these estimators to compare is how they respond to realistic outcome models and the corresponding estimation errors. Therefore, an important property of the simulations is the realism of the relationship between the outcome and observed confounding variables. Beyond this, the simulation must have a realistic level of confounding; if the treatment is unrelated to the confounders, then adjustment for confounding using any of these methods is unnecessary.

To create a realistic data simulation, we used a semi-synthetic design, where some of the variables are observed from real data and others are simulated. We use the Wine Quality data \cite{CortezCeAlMaRe09} from the UCI Machine Learning Repository as the basis for the observed data. This dataset has a variety of tabular, measured properties of vinho verde wines from Portugal, along with ordinal assessments of the wines' quality on a scale of 1 to 10; see \citet{CortezCeAlMaRe09} for more information. We imagine a causal inference problem where some (hypothetical) wine preservation technique is used on some of the wines, after they are produced, but before their quality is assessed. In this hypothetical setting, we are interested in estimating the causal effect of the wine preservation technique on the wine quality. Confounding comes into play, because the wines that are preserved may be selected based on their perceived future quality.

To implement this simulation setting, for each white wine in the dataset, we sample the hypothetical treatment on the basis of a propensity score model learned from an auxiliary task described below. For all treated units, we increase the wine quality score by $1$, so that the treatment effect is known to be exactly $1$.

The auxiliary task for generating the propensity score that we used was to build a predictive model on the red wine data to predict whether the wine had a quality score above or below the median score in the data. The predicted probabilities on the white wine data according to this model were re-scaled so that the log odds of treatment had a variance of $1$, and this was used as the propensity score in simulation.

This approach ensures that the joint distribution of $Y(0)$ and $X$ are as realistic as possible, coming directly from observed data. The treatment is generated synthetically, it is a realistic function learned from a flexible estimation procedure on a statistically independent dataset.

\subsection{Results}
\label{sec:ml-sim-results}
The results of this simulation study, shown in Table~\ref{tab:ml-sim-compare}, indicate practical differences among the semiparametric methods considered here. Overall, we see that the estimation error is sensitive to the hyperparameter tuning methods, and by extension, the nuisance parameter estimation. However, some methods are more sensitive to these issues than others. For example, the TMLE and G-computation estimators outperform the AIPW and IPW estimators in most properties. The very poor performance of the AIPW and IPW estimators in Table~\ref{tab:ml-sim-compare}(a) is likely due to propensity score estimation errors that have a high leverage on these methods.

\begin{table}[t]
\caption{The method performance in simulations doesn't match the predictions made by existing asymptotic theory. The simulation design is described in Section~\ref{sec:ml-sim-design} with ML estimation of nuisance parameters, described in Section~\ref{sec:ml-sim-ml}.}
\begin{subtable}[t]{0.45\textwidth}
\begin{tabular}[t]{lrrr}
\toprule
Method &      Bias &       \makecell[b]{Standard\\Error} &      RMSE \\
\midrule
G-computation   & -0.0194 &  0.0337 &  0.0389 \\
TMLE          & -0.0061 &  0.0388 &  0.0392 \\
AIPW          & -0.0018 &  0.0437 &  0.0437 \\
IPW           &  0.1124 &  0.2088 &  0.2371 \\
\bottomrule
\end{tabular}
    \caption{Results when ML hyperparameters are selected to be the best for treatment effect estimation. TMLE and G-computation perform well in both bias and variance.}
\end{subtable}
\hspace{\fill}
\begin{subtable}[t]{0.45\textwidth}
\begin{tabular}[t]{lrrr}
\toprule
Method &      Bias &       \makecell[b]{Standard\\Error} &      RMSE \\
\midrule
G-computation   & -0.0415 &  0.0329 &  0.0530 \\
TMLE         & -0.0130 &  0.0381 &  0.0403 \\
AIPW          & -0.0097 &  0.0400 &  0.0412 \\
IPW           & -0.1879 &  0.2330 &  0.2993 \\
\bottomrule
\end{tabular}
    \caption{Results when ML hyperparameters are selected to be the best for prediction performance in cross-validation. TMLE and AIPW perform best in RMSE, but  G-computation has the smallest standard error.}
\end{subtable}
    \label{tab:ml-sim-compare}
\end{table}

With a fixed choice of hyperparameters, the AIPW method has a 30\% larger standard error than the G-computation method; the TMLE is somewhere in-between these two. Even when using hyperparameters adaptively selected by cross-validation, where the AIPW method outperforms G-computation, the standard error of the AIPW method is 28\% larger than G-computation, and 8\% larger than the TMLE. The improved performance comes from it's lower bias. However, we note that the RMSE of the AIPW with these hyperparameters is still higher than G-computation or the TMLE with an appropriately chosen hyperparameter for the treatment effect estimation problem.

The bias reduction properties of doubly robust estimators have been studied extensively by many in the causal inference literature, see, for example, the discussion and references in \citet{ChernozhukovChDeDuHaNeRo16}. Here, we focus on some of the features of these methods that are not well-predicted by this theory. In particular, we emphasize the differences in variance. These are often de-emphasized in the literature, because in large-sample asymptotic regimes or high dimensional regimes with sparse nuisance parameter models, they are lower order terms that become negligible. However, in the high dimensional regime that we study in Sections~\ref{sec:linear} and~\ref{sec:logistic}, they are not negligible, making them easier to compare in theoretical analysis.

\section{Linear outcome model}
\label{sec:linear}
Now, we study and compare the statistical behavior of the aforementioned methods in the proposed high dimensional regime, and present our main results on the methods' performance in this setting. We begin with a simple model--linear regression, as a model for the potential outcomes, given the confounding variables. That is, we assume
\begin{align}
    \begin{cases}
    X \sim \normal{}(0, \Sigma) \\
    Y(w) = \beta_w^\top X + \epsilon,
    \end{cases}
    \label{eq:linear-model}
\end{align}
for $w = 0, 1$ and $\epsilon \sim (0, \sigma^2)$ drawn i.i.d., independent of $X$, with $\|\Sigma\|_{\mathrm{op}} \le M < \infty$.
This model serves as a proof of concept, to demonstrate that there are non-vacuous convergence results for estimating the ATE, even when estimates of the nuisance parameters do not converge to the true nuisance parameter, such as when $d$ is of the same order of magnitude as $n$. The key property provided by linear models is that overfitting the data when fitting nuisance parameters introduces excess variance, but not bias in the nuisance parameter estimates. The choice of the distribution of $X_i$ also makes analysis of many of the estimators straightforward; most of the results would also hold for $X_i$ a sub-Gaussian vector with an appropriately bounded Orlicz norm, and appropriate modification of the constants.

Because the treatment is binary, it is not generally appropriate to assume a linear model is parametrically well specified for the propensity score $\pi(x) = P(W = 1 \mid X = X)$. Therefore, throughout we will assume that the propensity score follows a logistic regression model. Following the discussion in \citet{SurCa2019}, we will assume that $\var(\eta^\top X) = \gamma^2 \in (0, \infty),$ for all $n$, so that the propensity score does not become degenerate, leading to $P(W=1 \mid X=x) \in \{0,1\}.$ With this choice of propensity score, the $\chi^2$-divergence between the covariate distributions $P_{X\mid W=w}(\cdot)$ of the treated ($w=1$) and control ($w=0$) populations is finite, even though strong overlap $0 < \underline{p} \le P(W=1 \mid X=x) \le 1-\underline{p}$ does not hold almost surely for any $p$; aside from the high-dimensionality, all of the necessary assumptions for nonparametric identification of the treatment effect hold in our model.

When assuming a linear model, $\what{\mu}_w(x)$ can be estimated by $\what{\beta}_w^\top x$, where $\what{\beta}_w$ is estimated by applying linear regression of $Y$ on $X$ among the individuals with $W=w$,
\begin{equation}
    \what{\beta}_w = \left(\sum_{i \in \mathcal{I}_1} \ind{W_i=w} X_i X_i^\top \right)^{-1}\sum_{i \in \mathcal{I}_1} \ind{W_i=w} X_i Y_i = \left( \what\Sigma_1 \right)^{-1} \sum_{i \in \mathcal{I}_1} \ind{W_i=w} X_i Y_i / N_{1w}.
    \label{eq:linear-outcome-mmse}
\end{equation}
with  $N_{1w} = \sum_{i=1}^n \ind{W_i=w}$ and $\what{\Sigma}_{w} = \frac{1}{N_{1w}}\sum_{i=1}^n \ind{W_i=w} X_i X_i^\top$.

\subsection{G-computation}
\label{sec:linear-g-est-high-theory}
Plugging in the above estimator \eqref{eq:linear-outcome-mmse} to the general G-computation estimator~\eqref{eq:imputation}, we get a simple expression for G-computation in the linear model, 
\begin{equation}
    \what{\theta} = \frac{1}{n_2} \sum_{i \in \mathcal{I}_2} \what{\beta}_1^\top X_i - \what{\beta}_0^\top X_i.
    \label{eq:imputation-linear}
\end{equation}
Existing theoretical analyses of G-computation focus on models where the nuisance parameter estimates converge according to some condition like $\|\what{\mu}_w(\cdot) - \mu_w(\cdot)\|^2 = \E[ ((\what{\beta}_w - \beta_w)^\top X)^2 \mid \what{\beta}_w ] = o_P(1)$ (or perhaps a stronger rate). The following proposition shows that this does not happen in our high dimensional setting.

\begin{proposition}
Assume that the data follow the model in \eqref{eq:linear-model} and for each $w \in \{0, 1\}$, let $\what{\beta}_w$ be as in \eqref{eq:linear-outcome-mmse}, with $d(n) / n \to \kappa \in (0, 1)$. Let $G$ be the event that $N_{1w} > d,$ so that $\what{\beta}_w$ is well-defined. Let $x_n \in R^{d(n)}$ be a sequence of test points such that $\| x_n \| \asymp \sqrt{d}$. Then,
\begin{equation}
    \liminf_{n \to \infty} \var(x_n^\top \what{\beta}_w \mid G) > 0.
\end{equation}
\label{prop:prediction-variance}
\end{proposition}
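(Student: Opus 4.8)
The plan is to show that the contribution of the outcome noise $\epsilon$ to $x_n^\top\what\beta_w$ already fails to have vanishing variance. Fix $w$, write $\mathcal{I}_1^w=\{i\in\mathcal{I}_1:W_i=w\}$, and let $N_{1w}=|\mathcal{I}_1^w|$ be the number of units used to form $\what\beta_w$. By consistency, $Y_i=\beta_w^\top X_i+\epsilon_i$ on $\{W_i=w\}$, so the normal equations give the exact identity
\[
x_n^\top\what\beta_w \;=\; x_n^\top\beta_w \;+\; x_n^\top\what\Sigma_w^{-1}\,\frac{1}{N_{1w}}\sum_{i\in\mathcal{I}_1^w}X_i\epsilon_i ,
\]
in which the first term is nonrandom. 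Condition on $(X_i,W_i)_{i\le n}$. The exogeneity assumption gives $\E[\epsilon_i\mid (X_j,W_j)_j]=0$ and $\var(\epsilon_i\mid (X_j,W_j)_j)=\sigma^2$, so the second term has conditional mean $0$ and, using $\sum_{i\in\mathcal{I}_1^w}X_iX_i^\top=N_{1w}\what\Sigma_w$, conditional variance exactly $\tfrac{\sigma^2}{N_{1w}}\,x_n^\top\what\Sigma_w^{-1}x_n$. Dropping the nonnegative between-group term in the law of total variance,
\[
\var\!\big(x_n^\top\what\beta_w\,\big|\,G\big)\;\ge\;\sigma^2\,\E\!\Big[\tfrac{1}{N_{1w}}\,x_n^\top\what\Sigma_w^{-1}x_n\;\Big|\;G\Big],
\]
and it suffices to lower bound the right-hand side by a positive constant.

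Two facts do this. First, $N_{1w}\le n$ deterministically. Second, I claim that on an event $E$ of conditional probability $1-o(1)$ given $G$ one has $\lambda_{\max}(\what\Sigma_w)\le C$, where $C$ depends only on an upper bound on $\|\Sigma\|_{\mathrm{op}}$, the sub-Gaussian norm of the covariates, and a lower bound on $P(W=w)$. On $E$, $x_n^\top\what\Sigma_w^{-1}x_n\ge\|x_n\|^2/C$, which is of order $d$ since $\|x_n\|\asymp\sqrt d$. Because the integrand is nonnegative, restricting the expectation to $E$ and using $N_{1w}\le n$ together with $d/n\to\kappa$ gives
\[
\var\!\big(x_n^\top\what\beta_w\,\big|\,G\big)\;\ge\;\frac{\sigma^2\,\|x_n\|^2}{C\,n}\,P(E\mid G),
\]
whose $\liminf$ is a positive multiple of $\sigma^2\kappa/C>0$, as required.

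To produce $E$: condition on $(W_i)_{i\le n}$, which also fixes $N_{1w}$ and the event $G$ (the latter being $\sigma((W_i))$-measurable). Conditionally, $\{X_i:i\in\mathcal{I}_1^w\}$ are i.i.d.\ from the tilted law $P_{X\mid W=w}$. Since the logistic propensity satisfies $0<P(W=w\mid X=x)<1$ pointwise, this law has density at most a constant times the $\normal{}(0,\Sigma)$ density; hence it is sub-Gaussian with Orlicz norm controlled by $\|\Sigma\|_{\mathrm{op}}$ and $P(W=w)^{-1}$, and its covariance satisfies $\|\E[XX^\top\mid W=w]\|_{\mathrm{op}}\le\|\Sigma\|_{\mathrm{op}}/P(W=w)$. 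On $G$ we have $N_{1w}>d$, so the aspect ratio is below $1$, and, conditionally on $(W_i)$ with $G$ holding, standard sample-covariance concentration for sub-Gaussian vectors gives $\|\what\Sigma_w-\E[XX^\top\mid W=w]\|_{\mathrm{op}}\le C_1$ with probability at least $1-2e^{-d}$ for an $n$-free constant $C_1$. Taking $E=\{\lambda_{\max}(\what\Sigma_w)\le C\}$ with $C=\|\Sigma\|_{\mathrm{op}}/P(W=w)+C_1$ and integrating out $(W_i)$ over $G$ gives $P(E\mid G)\ge 1-2e^{-d}\to1$, completing the argument.

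The step I expect to be the main obstacle is the bookkeeping around this conditioning: $\what\Sigma_w$ is built from the \emph{random} subsample selected by $W$, so one must pass to the tilted law $P_{X\mid W=w}$ and verify it is regular enough — sub-Gaussian with Orlicz norm bounded uniformly in $n$, bounded covariance, and aspect ratio bounded away from $1$ on $G$ — for the random-matrix bound to apply with an $n$-free constant $C$; everything else is the elementary variance decomposition of the first paragraph. A minor caveat is that conditioning on $G$ is only meaningful when $\liminf_n P(G)>0$; this holds, and in fact $P(G)\to1$, whenever $\kappa$ is small enough that $d$ stays below the expected count of units with $W=w$ in $\mathcal{I}_1$, which we may assume (otherwise $\what\beta_w$ fails to be defined with probability tending to one).
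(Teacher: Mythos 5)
Your proof is correct and follows essentially the same route as the paper: both reduce the claim to lower-bounding $\E\bigl[\|x_n\|^2\,\lambda_{\min}(\what\Sigma_w^{-1})/N_{1w}\mid G\bigr]$ via $\lambda_{\min}(\what\Sigma_w^{-1})=1/\lambda_{\max}(\what\Sigma_w)$, and both control $\lambda_{\max}(\what\Sigma_w)$ using sub-Gaussianity of the tilted law $P_{X\mid W=w}$ together with Vershynin-style sample-covariance concentration. The only difference is cosmetic: you truncate to a high-probability event $E$ to lower-bound the expectation, whereas the paper bounds $\E[1/\lambda_{\max}(\what\Sigma_w)]$ directly through an integral representation of the tail (its Lemma~\ref{lem:matrix-bound}).
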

\noindent Proof of this result can be found in Section~\ref{sec:prediction-variance-proof}.

While the nuisance parameter estimates are inconsistent, they remain unbiased with bounded variance. This fact is key to giving our main result for this section. In the following theorem, we show that plugging these coefficients in to estimate the ATE leads to an asymptotically $\sqrt{n}$-consistent estimate of the ATE under regularity conditions preventing the least squares estimators $\what{\beta}_1$ and $\what{\beta}_2$ from being degenerate. The nature of these conditions is not a critical part of the technical development in this paper, and we abstract them into the following assumption:
\begin{assumption}
\label{assume:least-squares-good}
There exists $C_G, \sigma_{\max}^2 < \infty$ such that $P( \lambda_{\max}(\what\Sigma_w^{-1}) < C_G,~\text{for w=0,1}) \to 1$, and $\var(Y(w) \mid X=x) \le \sigma_{\max}^2$.
\end{assumption}
This assumption is satisfied when the entries of $X_i$ are sufficiently independent and have thin tails in both the treated and control samples, see~\citet{Vershynin12,Vershynin16} for details of the relevant random matrix theory.

\begin{theorem}
\label{thm:linear-moments}
Let $\what{\theta}$ be as defined in \eqref{eq:imputation-linear}, and let $\theta = \E[Y(1) - Y(0)]$. Assume the data follows the structural model in Eqs. \eqref{eq:prop-structural}-\eqref{eq:control-structural}, with $g^{-1}(t) = t$, $\var(X) = \Sigma$, having a bounded condition number.
Then, there exists a sequence of events $G_n$ measurable with respect to $(X_i, W_i)_{i \in \mathcal{I}_1}$, such that $P(G_n) \to 0$, $\E[\what{\theta} \mid G_n] = \theta$, and
\begin{align*}
    \var\left( \what{\theta} \mid G_n \right) &= (\beta_1 - \beta_0)^\top \Sigma (\beta_1 - \beta_0) / n_2 + \trace{}\left( \var\left(\what\beta_1 - \what\beta_0 \mid G_n \right)\Sigma \right) / n_2 \\
    &~\quad~\quad + \E[X]^\top \var(\what\beta_1-\what{\beta}_0 \mid G_n) \E[X].
\end{align*}
If Assumption~\ref{assume:least-squares-good} holds and $\|\beta_w\|_2 \asymp 1$, then $\sqrt{n}(\what{\theta} - \theta) = O_P(1)$.
\end{theorem}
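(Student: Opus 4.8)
The plan is to pass to a high-probability conditioning event $G_n$ on which the first-stage least-squares fits are well-behaved, compute $\E[\what\theta \mid G_n]$ and $\var(\what\theta \mid G_n)$ exactly by iterated conditioning on the first split, and then bound the three variance terms to get the $O(1/n)$ rate. Take $G_n$ to be the event, measurable with respect to $(X_i,W_i)_{i \in \mathcal{I}_1}$, on which $\lambda_{\max}(\what\Sigma_w^{-1}) < C_G$ and $N_{1w} \ge c n$ for both $w \in \{0,1\}$, so each $\what\beta_w$ is well-defined and $\what\Sigma_w$ is uniformly well-conditioned. The first part has probability tending to one by Assumption~\ref{assume:least-squares-good}; the second does because the logistic propensity with $\var(\eta^\top X) = \gamma^2 \in (0,\infty)$ and bounded mean keeps $\E[W]$ in a compact subset of $(0,1)$, so $N_{1w}$ concentrates around a quantity of order $n$ by Hoeffding's inequality. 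Hence $P(G_n^c) \to 0$, and (with split sizes $n_1,n_2 \asymp n$) $N_{1w} \asymp n$ on $G_n$.

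Write $\what\beta_w = \beta_w + \what\Sigma_w^{-1}\frac{1}{N_{1w}}\sum_{i \in \mathcal{I}_1,\, W_i = w} X_i \epsilon_i$ with $\epsilon_i := Y_i - \beta_{W_i}^\top X_i$. Since $\E[\epsilon_i \mid (X_j,W_j)_{j \in \mathcal{I}_1}] = 0$ and $G_n$ is $(X_j,W_j)_{j \in \mathcal{I}_1}$-measurable, the tower property gives $\E[\what\beta_w \mid G_n] = \beta_w$; and because $\what\beta_1,\what\beta_0$ use disjoint index sets with independent errors, they are conditionally independent given $(X_j,W_j)_{j \in \mathcal{I}_1}$, so
\[ \var(\what\beta_1 - \what\beta_0 \mid G_n) = \E\!\left[\var(\what\beta_1 \mid \{X,W\}) + \var(\what\beta_0 \mid \{X,W\}) \,\middle|\, G_n\right] \preceq \sigma_{\max}^2\, \E\!\left[\tfrac{\what\Sigma_1^{-1}}{N_{11}} + \tfrac{\what\Sigma_0^{-1}}{N_{10}} \,\middle|\, G_n\right]. \]
Conditioning on $\mathcal{F}_1 := \sigma((X_i,W_i,Y_i)_{i \in \mathcal{I}_1})$ and using that the $X_i$, $i \in \mathcal{I}_2$, are i.i.d.\ with mean $\E[X]$ and covariance $\Sigma$, independent of $\mathcal{F}_1$, gives $\E[\what\theta \mid \mathcal{F}_1] = (\what\beta_1-\what\beta_0)^\top\E[X]$ and $\var(\what\theta \mid \mathcal{F}_1) = \frac{1}{n_2}(\what\beta_1-\what\beta_0)^\top\Sigma(\what\beta_1-\what\beta_0)$. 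The first identity yields $\E[\what\theta \mid G_n] = (\beta_1-\beta_0)^\top\E[X] = \theta$; the law of total variance under $P(\cdot\mid G_n)$, together with the expansion $\what\beta_1-\what\beta_0 = (\beta_1-\beta_0) + \Delta$ where $\E[\Delta\mid G_n]=0$ and $\E[\Delta^\top\Sigma\Delta\mid G_n] = \trace{}(\var(\what\beta_1-\what\beta_0\mid G_n)\Sigma)$, produces exactly the stated variance formula.

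For the rate, bound each term on $G_n$. The first is at most $\frac{1}{n_2}\lambda_{\max}(\Sigma)\|\beta_1-\beta_0\|^2 = O(1/n)$ since $\Sigma$ is well-conditioned and $\|\beta_w\| \asymp 1$. For the others, the key estimate is that on $G_n$ one has $\var(\what\beta_w \mid \{X,W\}) \preceq (\sigma_{\max}^2 C_G/(cn))\,I$, hence $\var(\what\beta_1-\what\beta_0\mid G_n) \preceq (2\sigma_{\max}^2 C_G/(cn))\,I$, so
\[ \trace{}\!\big(\var(\what\beta_1-\what\beta_0\mid G_n)\Sigma\big) \le \frac{2\sigma_{\max}^2 C_G}{cn}\,\trace{}(\Sigma) \le \frac{2\sigma_{\max}^2 C_G}{cn}\, d\,\lambda_{\max}(\Sigma) = O(d/n) = O(1), \]
the non-vanishing-but-bounded ``overfitting variance'' of this regime, while $\E[X]^\top\var(\what\beta_1-\what\beta_0\mid G_n)\E[X] \le (2\sigma_{\max}^2 C_G/(cn))\|\E[X]\|^2 = O(1/n)$ (identically zero in model~\eqref{eq:linear-model}, where $\E[X]=0$). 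Combining, $n\,\var(\what\theta\mid G_n) = O(1)$, so Chebyshev under $P(\cdot\mid G_n)$ gives $P(|\sqrt n(\what\theta-\theta)| > K \mid G_n) \le n\var(\what\theta\mid G_n)/K^2$, which with $P(G_n^c)\to 0$ yields $\sqrt n(\what\theta-\theta) = O_P(1)$.

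There is no deep obstacle: the work is in the careful conditioning---keeping $G_n$ first-split-measurable so iterated expectation delivers \emph{exact} unbiasedness and the clean variance split---and in the bookkeeping that makes $\trace{}(\Sigma) = O(d)$ and $N_{1w}$ of order $n$ combine so that the overfitting term is $\Theta(d/n) = \Theta(1)$ rather than $o(1)$ or divergent. The substantive point is that this excess variance sits at the \emph{same} $1/n$ order as the oracle variance $(\beta_1-\beta_0)^\top\Sigma(\beta_1-\beta_0)/n_2$, which is exactly why $\sqrt n$-consistency survives; establishing $P(G_n)\to 1$ is where the deliberately-abstracted random-matrix regularity (Assumption~\ref{assume:least-squares-good}) and the non-degenerate-overlap condition on the propensity enter.
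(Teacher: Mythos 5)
Your proof is correct and follows essentially the same route as the paper's: the same first-split-measurable event $G_n$ built from Assumption~\ref{assume:least-squares-good} plus a lower bound on $N_{1w}$, conditional unbiasedness of $\what\beta_w$ via the tower property, a law-of-total-variance decomposition yielding the stated three-term formula, the bound $\var(\what\beta_1-\what\beta_0\mid G_n)\preceq (C/n)I_d$ so that the trace term is $\Theta(d/n)$, and Chebyshev. The only cosmetic difference is that you condition on the full first split $\mathcal{F}_1$ where the paper conditions on $\overline{X}$ in the total-variance step (the two decompositions are dual and give the identical formula), and you are in fact slightly more explicit than the paper about verifying $P(G_n)\to 1$.
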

\noindent The proof of this theorem is in Section~\ref{sec:linear-moments-proof}.

The key difference in the high dimensional regime where $d/n \to \kappa > 0$ is that $\trace{\var(\what\beta_1 - \what\beta_0)}/n \asymp 1$, because while $\var(\what{\beta}_w) \asymp (1/n)I_d$, the number of terms summed in the trace is growing at the same rate as $n$.

By further specifying the data generating model to assume, as in \eqref{eq:linear-model}, that the covariates are Gaussian and the potential outcomes are homoskedastic, we can get a more exact characterization of the variance, that precisely highlights the effect of being in the high dimensional regime. For example, Proposition~\ref{prop:gaussian-nuisance-exact} gives an exact expression for \smash{$\var(\what{\beta}_w\mid N_{1w} )$}. Plugging this in to the expression for the variance of $\what{\theta}$ in Theorem~\ref{thm:linear-moments} shows precisely how the variances diverge from one another under~\eqref{eq:linear-model} when $\Sigma=I$: for $n_2 / n \to p,~n_{1w} /n \to p_{1w}$,
\begin{equation*}
    \var(\sqrt{n}(\what\theta - \theta) \mid N_{11} = n_{11}, N_{10} = n_{10}) \to \frac{\|\beta_1 - \beta_0\|_2^2}{p} + \frac{\sigma^2}{1-\kappa} \|\E[X]\|_2^2(p_{11}^{-1} + p_{10}^{-1}) + \sigma^2 \kappa/(n_2(1-\kappa)).
\end{equation*}
As $\kappa \to 0$, the second term decreases and the third term disappears. Therefore, these differences correspond to excess variance not typically captured by large sample theory. Regardless of how realistic all of the normality assumptions are, they are helpful for precisely understanding the impact of high dimensional data on the statistical behavior of the estimator $\what\theta$.
\begin{proposition}
If the data follow the model in \eqref{eq:linear-model}, conditional on $N_{1w} = \sum_{i \in I_2} \ind{W_i=w}$, $\what{\beta_1}$ and $\what{\beta_0}$ are uncorrelated, and as long as $N_{1w} > d-1$, the variance of $\what{\beta}_w$ is
\begin{equation*}
    \var\left(\what{\beta}_w\mid N_{1w} \right) = \E\left[\left(\sum_{i=1}^n \ind{W_i=w} X_i X_i^\top\right)^{-1}\right] = \frac{\Sigma^{-1}}{N_{1w} - d - 1}.
\end{equation*}
\label{prop:gaussian-nuisance-exact}
\end{proposition}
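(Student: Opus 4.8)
The plan is to reduce the statement to two standard facts: the exact conditional‑on‑design moments of ordinary least squares, and the mean of an inverse Wishart matrix. First I would use the structural equation \eqref{eq:linear-model} to write, on the almost‑sure event $\{N_{1w}\ge d\}$,
\[
\what\beta_w \;=\; \beta_w + \Big(\textstyle\sum_i\ind{W_i=w}X_iX_i^\top\Big)^{-1}\sum_i\ind{W_i=w}X_i\epsilon_i .
\]
Conditioning on $\mathcal F=\sigma(\{X_i,W_i\})$ and using $\E[\epsilon_i\mid\mathcal F]=0$, $\E[\epsilon_i\epsilon_j\mid\mathcal F]=\sigma^2\ind{i=j}$, the inner sum has conditional mean $0$ and conditional covariance $\sigma^2\sum_i\ind{W_i=w}X_iX_i^\top$, so $\E[\what\beta_w\mid\mathcal F]=\beta_w$ and $\var(\what\beta_w\mid\mathcal F)=\sigma^2(\sum_i\ind{W_i=w}X_iX_i^\top)^{-1}$ --- exactly the identities already used in the proofs of Proposition~\ref{prop:prediction-variance} and Theorem~\ref{thm:linear-moments}. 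The key point for what follows is that the conditional mean $\beta_w$ is nonrandom.

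Next I would pass from $\mathcal F$ down to the group counts via the laws of total variance and covariance. Since $\E[\what\beta_w\mid\mathcal F]=\beta_w$ does not depend on $\mathcal F$, the ``variance of the conditional mean'' term vanishes and $\var(\what\beta_w\mid N_{1w}) = \sigma^2\,\E[(\sum_i\ind{W_i=w}X_iX_i^\top)^{-1}\mid N_{1w}]$, which is the middle expression in the proposition. For the cross term, conditionally on $\mathcal F$ the quantities $\what\beta_1-\beta_1$ and $\what\beta_0-\beta_0$ are $\mathcal F$‑measurable linear images of the \emph{disjoint} blocks $\{\epsilon_i:W_i=1\}$ and $\{\epsilon_i:W_i=0\}$ of i.i.d.\ mean‑zero noise, so $\mathrm{Cov}(\what\beta_1,\what\beta_0\mid\mathcal F)=0$; combined with the nonrandom conditional means this gives $\mathrm{Cov}(\what\beta_1,\what\beta_0\mid N_{11},N_{10})=0$.

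It remains to evaluate $\E[(\sum_i\ind{W_i=w}X_iX_i^\top)^{-1}\mid N_{1w}=m]$. Under \eqref{eq:linear-model} the group‑$w$ covariate vectors are, given $N_{1w}=m$, i.i.d.\ $\normal{}(0,\Sigma)$, so $\sum_i\ind{W_i=w}X_iX_i^\top$ has a Wishart distribution with $m$ degrees of freedom and scale $\Sigma$, and the mean of its inverse is the classical quantity $\Sigma^{-1}/(m-d-1)$, valid once $m-d-1>0$. To keep things self‑contained one can derive this by writing the matrix as $\Sigma^{1/2}Z^\top Z\,\Sigma^{1/2}$ with $Z$ an $m\times d$ standard Gaussian matrix, reducing by rotational invariance to the expectation of a diagonal entry of $(Z^\top Z)^{-1}$, and recognizing that entry as the reciprocal of a $\chi^2_{m-d+1}$ variable. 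Substituting $m=N_{1w}$ yields the claimed formula.

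The one nonroutine point --- and the step I would be most careful about --- is the assertion that, conditionally on $N_{1w}$, the group‑$w$ covariates are i.i.d.\ $\normal{}(0,\Sigma)$. This is exact when treatment assignment is independent of $X$ (e.g.\ randomized), which is the regime in which the exact variance expansion displayed after the proposition is meant to be read; under the logistic propensity model of Section~\ref{sec:linear} the law of $X\mid W=w$ is a Gaussian tilted by $h^{-1}(\eta^\top\cdot)$ rather than a Gaussian, so the Wishart identity --- and hence the closed form $\Sigma^{-1}/(N_{1w}-d-1)$ --- holds exactly only in the former case. (I would also flag the mild point that the displayed formula requires $N_{1w}\ge d+2$ for the inverse‑Wishart first moment to be finite, slightly stronger than the stated $N_{1w}>d-1$ needed merely for invertibility.)
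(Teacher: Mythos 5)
Your reduction to the exact OLS conditional moments, the disjoint-noise argument for the uncorrelatedness of $\what{\beta}_1$ and $\what{\beta}_0$, and the inverse-Wishart first moment all match the paper. But the step you yourself flag as ``the one nonroutine point'' is precisely where your argument stops short, and your resolution of it is wrong: you conclude that the closed form $\Sigma^{-1}/(N_{1w}-d-1)$ holds exactly only when treatment is independent of $X$. The proposition is stated and used under the logistic propensity model of Section~\ref{sec:linear}, where $X\mid W=w$ is a tilted Gaussian and $\sum_i\ind{W_i=w}X_iX_i^\top$ is \emph{not} Wishart; the whole point of the paper's proof is that the identity nevertheless holds exactly there. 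The missing idea is Lemma~\ref{lem:symmetry-wishart}: the Gram matrix is invariant under the sign flips $x_i\mapsto -x_i$, the Gaussian density is even, and the logistic link satisfies $h^{-1}(t)+h^{-1}(-t)=1$, so symmetrizing each coordinate of the integral replaces the Bayes tilting factor $\prod_i 2h^{-1}(\eta^\top x_i)$ by $\prod_i\bigl(h^{-1}(\eta^\top x_i)+h^{-1}(-\eta^\top x_i)\bigr)=1$. Concretely,
\begin{equation*}
\E\Bigl[\,\prod_{i=1}^{m} 2h^{-1}(\eta^\top X_i)\,\Bigl(\sum_{i=1}^m X_iX_i^\top\Bigr)^{-1}\Bigr]
=\E\Bigl[\Bigl(\sum_{i=1}^m X_iX_i^\top\Bigr)^{-1}\Bigr]
\end{equation*}
with the $X_i$ on the right unconditionally $\normal{}(0,\Sigma)$, after which the inverse-Wishart formula applies verbatim (the factor $2$ is $1/P(W=w)$, which equals $2$ by symmetry of $\eta^\top X$; the case $w=0$ uses $f(x)=h^{-1}(-\eta^\top x)$, which satisfies the same symmetry). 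Without this symmetrization your proof establishes the result only in the randomized special case, which is not the regime the proposition is written for.

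Two smaller remarks. Your observation that finiteness of the inverse-Wishart mean requires $N_{1w}\ge d+2$ rather than the stated $N_{1w}>d-1$ is correct and worth keeping. Also, the first equality in the displayed chain should carry the factor $\sigma^2$, as in your write-up and in the paper's proof; its omission in the proposition statement is a typo.
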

The proof of Proposition~\ref{prop:gaussian-nuisance-exact} is deferred to Section~\ref{sec:proof-prop-gaussian-nuisance-exact}. The challenge of the proof is that $\sum_{i=1}^n \ind{W_i=w} X_i X_i^\top$ is nearly a Wishart distribution with $N_{1w}$ degrees of freedom, $\mathcal{W}(N_{1w}, \Sigma)$, except that $X_i \mid W_i=w$ is no longer Gaussian whenever $\gamma > 0$. The key to showing the result is to exploit the symmetry of $P(W_i \mid X_i=x)$ and the multivariate Gaussian distribution to show that the expectation of any function of $\sum_{i=1}^n \ind{W_i=w} X_i X_i^\top$ is equal to the expectation of that function of $W \sim \mathcal{W}(N_{1w}, \Sigma)$, which is shown in Lemma~\ref{lem:symmetry-wishart} of Section~\ref{sec:proof-prop-gaussian-nuisance-exact}.

\subsection{AIPW}
When the outcome is well-specified by a linear model, the bias in the outcome imputation model will be zero, even in high dimensions, and therefore, the AIPW estimator will be unbiased. Note that in the high dimensional regime, the estimator is no longer doubly robust in the sense of parametric specification: when the logistic regression model for the propensity score is correctly specified, but the ratio $d / n \to \kappa > 0$, the AIPW estimator $\what{\theta}_{\aipw{}}$ with MLE estimates of the nuisance parameters will only be unbiased if the outcome imputation models $\mu_w(x)$ are also correctly specified.

Additionally, the remainder term between $\what{\theta}_{\oracle{}}$ and $\what{\theta}_{\aipw{}}$ is no longer lower order when $d \asymp n$. To simplify exposition, assume that $\pi(x)$ is known, so that $\what{\pi}(x) = \pi(x)$. Then, the remainder term
\begin{equation*}
    \what{\theta}_{\oracle{}} - \what{\theta}_{\aipw{}} = \frac{1}{n} \sum_{i=1}^n \frac{\pi(X_i) - W_i}{\pi(X_i)} (\what{\beta}_1 - \beta_1)^\top X_i + \frac{\pi(X_i) - W_i}{1 - \pi(X_i)} (\what{\beta}_0 - \beta_0)^\top X_i
\end{equation*}
is no longer lower order, according to the following theorem:
\begin{theorem}
\label{thm:aipw-real-vs-oracle}
Let $\what{\theta}_{\aipw{}}$ be as defined in \eqref{eq:aipw} with $\what{\mu}_w(\cdot)$ defined according to \eqref{eq:linear-outcome-mmse}, and let $\what{\theta}_{\oracle{}}$ be as in \eqref{eq:aipw-oracle-linear}, and $\theta = \E[Y(1) - Y(0)]$. Assume the data follows the structural model in Eqs. \eqref{eq:prop-structural}-\eqref{eq:control-structural}, with $g^{-1}(t) = t$, $\var(X) = \Sigma$, having a bounded condition number.
Then, there exists a sequence of events $G_n$ measurable with respect to $(X_i, W_i)_{i \in \mathcal{I}_1}$, such that $P(G_n) \to 0$, $\E[\what{\theta} \mid G_n] = \theta$, and
\begin{gather*}
    \var\left( \sqrt{n}(\what{\theta}_{\aipw{}} - \what{\theta}_{\oracle{}}) \mid G_n \right) = \operatorname{trace}(\var(\what{\beta}_1\mid G_n) \Sigma_{Z^1}) + \operatorname{trace}(\var(\what{\beta}_0\mid G_n)  \Sigma_{Z^0}), \\
    \E\left[ (\what{\theta}_{\aipw{}} - \what{\theta}_{\oracle{}}) \what{\theta}_{\oracle{}} \mid G_n\right] = 0.
\end{gather*}
\end{theorem}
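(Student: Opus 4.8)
The plan is to analyze everything conditionally on the first-stage sample $\mathcal{I}_1$, on which $\what{\beta}_1$ and $\what{\beta}_0$ are deterministic, and to recast the remainder $R_n := \what{\theta}_{\aipw{}} - \what{\theta}_{\oracle{}}$ as an average over the fresh evaluation sample. Write $\delta_w := \what{\beta}_w - \beta_w$ and, for an evaluation-sample index $i$, define the score vectors
\[
Z_i^1 := \frac{W_i - \pi(X_i)}{\pi(X_i)}\, X_i, \qquad Z_i^0 := \frac{W_i - \pi(X_i)}{1 - \pi(X_i)}\, X_i .
\]
Since $\E[W_i - \pi(X_i) \mid X_i] = 0$, we have $\E[Z_i^w \mid X_i] = 0$, hence $\E[Z_i^w] = 0$, and $\Sigma_{Z^w} := \E[Z_i^w (Z_i^w)^\top]$ equals $\E[\frac{1-\pi(X_i)}{\pi(X_i)} X_i X_i^\top]$ for $w = 1$ and $\E[\frac{\pi(X_i)}{1-\pi(X_i)} X_i X_i^\top]$ for $w = 0$, both finite by the finite-$\chi^2$-divergence property of the logistic propensity model noted above. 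A one-line rearrangement of \eqref{eq:aipw-linear} and \eqref{eq:aipw-oracle-linear} (using $\what{\pi} = \pi$ and $W_i^2 = W_i$) yields $R_n = -\frac{1}{n}\sum_i \big( (Z_i^1)^\top \delta_1 + (Z_i^0)^\top \delta_0 \big)$, i.e.\ the remainder displayed just before the theorem (up to sign). Take $G_n$ exactly as in the proof of Theorem~\ref{thm:linear-moments} --- the event that $\lambda_{\max}(\what{\Sigma}_w^{-1}) \le C_G$ and $N_{1w} > \overline{p}\, n$ for $w \in \{0,1\}$ --- so that $\what{\Sigma}_w$ is invertible, $\what{\beta}_w$ is well defined, and, by the computation in that proof, $\E[\what{\beta}_w \mid G_n] = \beta_w$, i.e.\ $\E[\delta_w \mid G_n] = 0$.

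The unbiasedness and orthogonality identities then follow quickly. Because the vectors $Z_i^w$ are independent of $\mathcal{I}_1$ (hence of $G_n$, $\delta_1$, $\delta_0$) and mean zero, $\E[R_n \mid G_n] = 0$; since the oracle AIPW estimator is unbiased and, under the splitting convention, a function of the evaluation sample alone, $\E[\what{\theta}_{\aipw{}} \mid G_n] = \E[\what{\theta}_{\oracle{}} \mid G_n] = \theta$. For the second identity, condition on $\mathcal{I}_1$ and write $\what{\theta}_{\oracle{}} = \frac{1}{n}\sum_j \psi_j$ with $\psi_j$ the summand of \eqref{eq:aipw-oracle-linear}; for $j \ne i$ the factors $Z_i^w$ and $\psi_j$ are independent with $\E[Z_i^w] = 0$, while for $j = i$ one checks $\E[Z_i^w \psi_i] = 0$ by conditioning on $(X_i, W_i)$: the outcome-regression piece of $\psi_i$, once multiplied by the $W_i - \pi(X_i)$ factor carried by $Z_i^w$, vanishes because $\E[W_i - \pi(X_i) \mid X_i] = 0$, and the inverse-propensity piece vanishes because the residual $Y_i - \beta_{W_i}^\top X_i = \epsilon_i$ has conditional mean zero. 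Hence $\E[Z_i^w \what{\theta}_{\oracle{}} \mid \mathcal{I}_1] = 0$ and therefore $\E[R_n \what{\theta}_{\oracle{}} \mid G_n] = 0$; combined with the first identity this gives the Pythagorean decomposition $\var(\what{\theta}_{\aipw{}} \mid G_n) = \var(\what{\theta}_{\oracle{}} \mid G_n) + \var(R_n \mid G_n)$.

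For the variance formula, apply the law of total variance conditioning on $\mathcal{I}_1$. Since $\E[R_n \mid \mathcal{I}_1] = 0$ (the $Z_i^w$ are mean zero and independent of $\mathcal{I}_1$), only the first term survives, $\var(\sqrt{n}\, R_n \mid G_n) = \E[\var(\sqrt{n}\, R_n \mid \mathcal{I}_1) \mid G_n]$. Conditionally on $\mathcal{I}_1$ the summands $(Z_i^1)^\top \delta_1 + (Z_i^0)^\top \delta_0$ are i.i.d.\ and mean zero, so $\var(\sqrt{n}\, R_n \mid \mathcal{I}_1) = \delta_1^\top \Sigma_{Z^1} \delta_1 + \delta_0^\top \Sigma_{Z^0} \delta_0 + 2\, \delta_1^\top \E[Z_i^1 (Z_i^0)^\top]\, \delta_0$. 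Taking $\E[\,\cdot \mid G_n]$ and using linearity and the cyclic property of the trace together with $\E[\delta_w \delta_w^\top \mid G_n] = \var(\what{\beta}_w \mid G_n)$ (valid because $\E[\delta_w \mid G_n] = 0$), the first two terms become $\operatorname{trace}(\var(\what{\beta}_1 \mid G_n) \Sigma_{Z^1}) + \operatorname{trace}(\var(\what{\beta}_0 \mid G_n) \Sigma_{Z^0})$; the cross term vanishes because $\what{\beta}_1$ and $\what{\beta}_0$ are fit on disjoint subsamples, so conditionally on the allocation $(N_{11}, N_{10})$ they are independent with conditional means $\beta_1$ and $\beta_0$, whence $\E[\delta_1^\top A\, \delta_0 \mid G_n] = 0$ for the fixed matrix $A = \E[Z_i^1 (Z_i^0)^\top]$.

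The main obstacle is the conditioning bookkeeping that makes the cross term vanish: one must check that restricting to $G_n$ preserves both the conditional independence of $\what{\beta}_1$, $\what{\beta}_0$ and their conditional unbiasedness. The clean route is to condition additionally on the treated/control split $(N_{11}, N_{10})$ of $\mathcal{I}_1$ --- given which the treated-sample quantities, including the event $\{\lambda_{\max}(\what{\Sigma}_1^{-1}) \le C_G\}$, are independent of the control-sample ones, while the inner expectation of $\delta_w$ over the outcome noise is still zero --- and then to average back over the split; the one genuine subtlety is that $N_{11} + N_{10}$ is fixed, so the two counts are dependent. A secondary, purely bookkeeping point is to fix the sample-splitting convention so that the empirical average in \eqref{eq:aipw-linear} really involves $\what{\beta}_w$ estimated on data independent of each summand (evaluate on a held-out sample, or use cross-fitting and note that the cross-fold covariance of $R_n$ also vanishes by the same ``conditionally unbiased $\delta$ is uncorrelated with $Z_i$'' argument). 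Everything else is the routine algebra sketched above.
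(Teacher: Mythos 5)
Your proof is correct and follows essentially the same route as the paper's: condition on the first-stage sample, write the remainder as a mean-zero linear form in the $Z_i^w$, apply the law of total variance together with the trace identity $\E[\delta_w^\top \Sigma_{Z^w}\delta_w \mid G_n]=\operatorname{trace}(\var(\what{\beta}_w\mid G_n)\Sigma_{Z^w})$, and use sample splitting for the orthogonality to $\what{\theta}_{\oracle{}}$. The one place you are more careful than the paper is the cross term $2\,\delta_1^\top \E[Z_i^1 (Z_i^0)^\top]\,\delta_0$ in the conditional variance (which equals $2\,\delta_1^\top\Sigma\,\delta_0$ and is not identically zero): the paper's displayed conditional variance silently omits it, whereas you correctly retain it and show that its expectation over $G_n$ vanishes via the conditional independence and conditional unbiasedness of $\what{\beta}_1$ and $\what{\beta}_0$ fit on disjoint subsamples.
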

\noindent See Section~\ref{sec:aipw-real-vs-oracle-proof} for proof.

While each term of $\var(\what{\beta}_w)$ scales like $\frac{1}{n}$ (recall Theorem~\ref{thm:linear-moments} and Proposition~\ref{prop:gaussian-nuisance-exact}), the trace is the sum of $d$ such terms, so $\var(\sqrt{n}(\what{\theta}_{\oracle{}} - \what{\theta}_{\aipw{}})) \asymp \frac{d}{n}$. As claimed, this is lower order when $d = o(n),$ but in the present setting, this converges to a nonzero constant, showing that the realizable AIPW estimator $\what{\theta}_{\aipw{}}$ will not achieve the semiparametric efficiency bound derived by \citet{hahn98} for when $d$ is fixed.

It's instructive to compare the variances of the AIPW and G-computation estimators. As previously mentioned, the G-computation estimator will have lower variance than the AIPW in a parametric setting, even in the large sample asymptotic setting. Here, we will focus on comparing the additional variance from being in high dimensions. The additional variance terms (implicitly conditional on $G_n$) that don't vanish for G-computation and the AIPW estimator, respectively, are
\begin{align*}
    \trace{}(\var(\what{\beta}_1) \Sigma_X)+\trace{}(\var(\what{\beta}_0) \Sigma_X)~~\mbox{and}~~\trace{}(\var(\what{\beta}_1) \Sigma_{Z^1})+\trace{}(\var(\what{\beta}_0) \Sigma_{Z^0}).
\end{align*}
Focusing on $\trace{}(\var(\what{\beta}_1) \Sigma_X)$ versus $\trace{}(\var(\what{\beta}_1) \Sigma_{Z^1})$, note that $\var(Z^1) \succeq \var(X)$, and the two are equal only if $\pi(x) = 0.5$. Therefore, the variance gap from being in high dimensions will be higher for the AIPW estimator than G-computation in most observational data.

Finally, we note that in the above derivations, we assumed that the propensity score $\pi(x)$ is a known function. In observational data, this is rarely the case. Even when the propensity score is known to follow a generalized linear model, such as the logistic regression model or probit regression model, the high dimensional nature of the covariates creates variation in $\what{\pi}(x)$ that does not disappear asymptotically, analogous to the results of Proposition~\ref{prop:prediction-variance}. This creates additional statistical variation for the realizable AIPW estimator $\what{\theta}_{\aipw{}}$ at the $\sqrt{n}$ rate, leading to additional excess variance between G-computation and the AIPW estimator in this setting. While one could apply results for the estimation of the parameters of logistic regression in high dimensions \cite{SurCa2019}, and characterize this additional variance, it is messier than it is instructive. Therefore, we emphasize the empirical simulation results showing the variance gap between the AIPW estimator with a known versus estimated propensity score in Section~\ref{sec:linear-experiments} to demonstrate this phenomenon.

\subsection{TMLE}
The high dimensional setting that we consider here provides a different theoretical lens to study differences between the AIPW and TMLE estimators, as the variation in the nuisance parameter estimates creates behavior that is not captured in the traditional asymptotic analyses. In fact, as we will see, the AIPW and TMLE estimators will not generally be asymptotically equivalent in this setting. For example, while traditional assumptions and analyses suggest that the realizable AIPW and TMLE estimators both converge to the oracle AIPW estimator, we showed in Section~\ref{sec:linear-aipw} that the realizable AIPW estimator has higher asymptotic variance than oracle AIPW estimator. An exact characterization of the difference between the AIPW and TMLE estimators is challenging due to the nonlinear dependence of the estimator on the nuisance parameters, however we discuss the important differences in behavior via an illustrative first-order approximation.
 
By plugging in the closed form expression for $\what{\epsilon}_w$, the relationship between the TMLE and AIPW estimators is immediate. Indeed, because the estimates for $\what{\epsilon}_w$ each factorize as
\begin{equation*}
\what{\epsilon}_0 = \frac{\sum_{i \in \mathcal{I}_2} \frac{1-W_i}{1-\what{\pi}(X_i)}(Y_i - \what{\mu}_0(X_i))}{\sum_{i \in \mathcal{I}_2} \frac{1-W_i}{(1- \what{\pi}(X_i))^2}},
~~\mbox{and}~~
\what{\epsilon}_1 = \frac{\sum_{i \in \mathcal{I}_2} \frac{W_i}{\what{\pi}(X_i)}(Y_i - \what{\mu}_1(X_i))}{\sum_{i \in \mathcal{I}_2} \frac{W_i}{\what{\pi}^2(X_i)}},
\end{equation*}
plugging these into the expression for $\what{\tau}_{\tmle{}}$ gives
\begin{equation}
\begin{aligned}
    \what{\tau}_{\tmle{}} &= \frac{1}{n_2} \sum_{i\in \mathcal{I}_2} \what{\beta}_1^\top X_i + \frac{\sum_{i \in \mathcal{I}_2} \frac{1}{\what{\pi}(X_i)}}{\sum_{i \in \mathcal{I}_2} \frac{W_i}{\what{\pi}^2(X_i)}} \frac{W_i} {\what{\pi}(X_i)}(Y_i - \what{\beta}_{1}^\top X_i) \\
    &~~~~~~~~~~~~~ - \what{\beta}_0^\top X_i - \frac{\sum_{i \in \mathcal{I}_2} \frac{1}{1-\what{\pi}(X_i)}}{\sum_{i \in \mathcal{I}_2} \frac{1-W_i}{(1-\what{\pi}(X_i))^2}} \frac{1-W_i} {1-\what{\pi}(X_i)}(Y_i - \what{\beta}_{0}^\top X_i).
\end{aligned}
\end{equation}
While an intimidating expression altogether, it has a recognizable relationship to the AIPW estimator, with the residuals reweighted by a term like
\begin{equation*}
    T_n = \frac{\tfrac{1}{n_2}\sum_{i \in \mathcal{I}_2} \frac{1}{\what{\pi}(X_i)}}{\tfrac{1}{n_2}\sum_{i \in \mathcal{I}_2} \frac{W_i}{\what{\pi}^2(X_i)}}.
\end{equation*}
The two estimators will be asymptotically equivalent whenever $T_n \cp 1$, by Slutsky's Theorem. This is implied, for example, by overlap and the assumption that $\|\what{\pi}(\cdot) - \pi(\cdot)\|_{2,P} = o_P(n^{1/4})$, so that $\tfrac{1}{n_2}\sum_{i \in \mathcal{I}_2} \frac{W_i}{\what{\pi}^2(X_i)} \cp \E[1/\pi(X)]$ (and similarly for the numerator). However, in the current setting, $T_n \not \to 1$, because the asymptotic variation in $\what{\pi}$ means that while the numerator converges to $\E[1/\pi(X)]$, the denominator converges to something larger, so $T_n$ will typically converge to something less than 1.

Interestingly, this analysis would suggest that when the propensity score is estimated, the TMLE will be between the G-computation estimate and the AIPW estimate, but when the propensity score is known, it will be much more similar to the AIPW estimate. This intuition bares out in the simulation experiments in Section~\ref{sec:linear-experiments} below.

\subsection{Simulation Comparison of ATE Estimators}
\label{sec:linear-experiments}
The above calculations provide insights about the broad nature of different ATE estimators in this high dimensional setting, and how the difficulty with estimating the nuisance parameters creates additional asymptotic variance in the estimators. However, it is easier to interpret the magnitude of these differences and their relevance to practice by applying these estimators in simulation.

We simulate data from the model described in the beginning of this section (Section~\ref{sec:linear}) with the parameters and details described here. In this simulation, for each $n$, the coefficients $\beta_w$ of the potential outcomes and $\eta$ of the propensity model are proportional, with the pattern
\begin{equation*}
    \eta_i = \begin{cases}
    c & i \le \tfrac{n}{8},\\
    -c & \tfrac{n}{8} < i \le \tfrac{n}{4}, \\
    0 & \text{otherwise}.
    \end{cases}
\end{equation*}
The outcome noise variance is fixed at $\sigma^2 = 1$, $X$ follows a normal distribution $\normal{}(0, I_d)$, and $c$ is chosen separately for each model so that $\gamma^2 = 5$ and the $R^2$ of the treated and control outcome models are $0.9$ and $0.8$, respectively. With these choices, the simulated data has strong confounding from the baseline covariates, and predictions typical of the amount of signal that machine learning estimators can extract from real data science applications. The different scales on the parameters of the potential outcomes introduces a small amount of heterogeneity in the treatment effects, but the main treatment effect comes from adding a constant offset of $\tau$ to $Y(1)$ and no offset for $Y(0)$.

\begin{table}[ht]
    \centering
    \begin{tabular}{llrrrr}
\toprule
Propensity Est. & & G-estimator &    IPW &   AIPW &   TMLE \\
\midrule
            MLE &       Bias &          0.000 &  -0.020 &  0.001 &  0.003 \\
             &  Std. Err. &        0.040 &  0.252 &  0.069 &  0.068 \\
          Platt &       Bias &       0.001 &  0.021 &  0.003 &  0.003 \\
           &  Std. Err. &       0.041 &  0.186 &  0.064 &  0.064 \\
         Propensity Oracle &       Bias &       0.001 & -0.002 & -0.003 &     0.000 \\
          &  Std. Err. &        0.04 &  0.249 &  0.063 &  0.066 \\
         Oracle &       Bias &          -- &     -- & -0.000 &     -- \\
          &  Std. Err. &          -- &     -- &  0.061 &     -- \\
\bottomrule
\end{tabular}
    \caption{Comparing estimators for ATE in the linear model of Section~\ref{sec:linear}. The realizeable AIPW, TMLE, and oracle AIPW are all unbiased with similar standard errors; the G-computation has slightly smaller standard errors, and the IPW has much larger standard errors. In these simulations, $n = 4000$ and $p = 8$ so that $\kappa$ is practically small, representing the ``fixed dimension'' setting traditionally studied in asymptotic statistics. The behavior in this setting is similar to the asymptotic behavior predicted in many settings, including high dimensional settings with enough sparsity or structural assumptions to allow the nuisance parameter estimates to eventually converge to the true nuisance parameters.}
    \label{tab:linear-fixed-results}
\end{table}

\begin{table}[ht]
    \centering
\begin{tabular}{llrrrr}
\toprule
Propensity Est. & & G-estimator &    IPW &   AIPW &   TMLE \\
\midrule
            MLE &       Bias &      -0.001 & -3.018 & -0.028 & -0.001 \\
             &  Std. Err. &       0.045 &   2.930 &  0.657 &  0.126 \\
          Platt &       Bias &      -0.001 &  0.726 & -0.001 & -0.001 \\
           &  Std. Err. &       0.045 &  0.092 &  0.056 &  0.057 \\
           SLOE &       Bias &      -0.001 & -0.013 & -0.004 & -0.001 \\
            &  Std. Err. &       0.045 &  0.358 &  0.110 &  0.076 \\
         Propensity Oracle &       Bias &      -0.001 & -0.003 & -0.002 & -0.002 \\
          &  Std. Err. &       0.045 &  0.228 &  0.072 &  0.075 \\
         Oracle &       Bias &          -- &     -- & -0.002 &     -- \\
          &  Std. Err. &          -- &     -- &  0.065 &     -- \\
\bottomrule
\end{tabular}
    \caption{Comparing estimators for ATE in the linear model of Section~\ref{sec:linear}. Unlike Table~\ref{tab:linear-fixed-results}, the variance of each method is not similar, and depends strongly on the propensity score estimation method. In these simulations, $n = 4000$ and $p = 320$ so that $\kappa = 0.08$. Propensity score estimator is one of logistic regression maximum likelihood (MLE), post-hoc Platt scaling (Platt), or high-dimensionality correction as in \citet{YadlowskyYuMcDA21} (SLOE).}
    \label{tab:linear-hd-results}
\end{table}

We ran $1000$ simulations using different random seeds, and fit the four estimators discussed above---the G-computation, IPW, AIPW, and TMLE estimators to the simulated data. The nuisance parameters were estimated using $5$-fold cross fitting, as described in~\cite{ChernozhukovChDeDuHaNeRo16}, a more sophisticated form of sample splitting than discussed above that better preserves the sample size used for estimation. With 5 folds, each nuisance parameter estimate is constructed with (approximately) $80\%$ of the samples, so the effective ratio $\kappa$ is $25\%$ larger than the nominal value $d/n$. We also compared the estimators to the oracle AIPW estimator, which uses the true nuisance parameters instead of the estimated ones in Eq.~\eqref{eq:aipw}. Standard asymptotic arguments assume conditions so that the realizable AIPW estimator with estimated nuisance parameters is asmptotically equivalent to the oracle AIPW estimator, and show that the oracle AIPW estimator is semiparametric-efficient, meaning that no estimator can have lower variance than this estimator without making parametric assumptions, making a standard basis of comparison for realizable estimators.

To estimate the bias of each estimator, we compared the empirical average of the estimates on each simulation to the known ATE $\tau$. To estimate the standard error, we computed the standard deviation of the estimates across simulation runs. First, Table~\ref{tab:linear-fixed-results} shows results for the ``fixed dimension'' asymptotics, where $d = 8$ is constant as $n$ grows large (i.e. $\kappa=0)$. In this setting, we can see that the realizable AIPW estimator, TMLE, and oracle AIPW estimator all are unbiased, and have approximately the same variance. By taking advantage of the parametric nature of the data, parametric G-computation is unbiased and has lower variance than any of the other estimators, as expected. Then, in Table~\ref{tab:linear-hd-results}, we show the high dimensional asmyptotics studied above, with $n = 4000$ and $d = 320$, so that $\kappa = 0.08$. As with the fixed dimension results, G-computation estimator performs the best. However, the behavior of the realizable and oracle AIPW estimators diverge from one another, and from the TMLE. %

\section{Logistic outcome model}
\label{sec:logistic}
An interesting nature question is whether the results of the above setting generalize to any other models for the outcome, beyond the humble linear model, starting with other generalized linear models. Given the importance of binary outcomes in data science applications, we focus on logistic regression. \citepos{SurCa2019} recent breakthrough on statistical inference and asymptotics in the high dimensional regime that we study provide important results on which we rely. They provide results for inference on the parameters of the logistic regression model, which under appropriate structural and causal assumptions have a causal interpretation as the effect of the treatment on the  conditional odds ratio of the outcome. In contrast, we focus on the average treatment effect; in cases where we show that the asymptotic bias of the average treatment effect is zero, the delta method would allow extension of our results to the treatment effect on the marginal odds ratio.

Compared to the linear model results from Section~\ref{sec:linear}, we emphasize the challenges of the nuisance parameter estimation for logistic regression that lead to non-negligible bias if not handled correctly. In particular, we find that either (a) plugging in the MLE estimates of the nuisance parameters, or (b) re-calibrating the nuisance parameter estimates before performing ATE estimation lead to biased ATE estimates. These results emphasize statistical issues with a commonly used approach in practical applications--the re-calibration of the propensity score. However, through bias correction for the high dimensional data, we can still achieve approximately unbiased ATE estimates in the logistic regression model. 

\subsection{Nuisance parameter estimation}

Understanding the challenges of estimating the nuisance parameters, $\mu_w(\cdot) = \E[Y(w) \mid X=\cdot]$ and $\pi(\cdot)$ is important to understanding the challenges of estimating the ATE in this setting. When these follow a logistic regression model, the standard way to estimate the parameters $\beta_1$, $\beta_0$ and $\eta$ is via maximum likelihood estimation (MLE; we will abuse this abbreviation and use it for the maximum likelihood estimate, as well). In this section, we will discuss estimation of $\eta$ as a stand-in for all the parameters, $\beta_1$, $\beta_0$ and $\eta$.

\citet{SurCa2019} show that in this setting where $d / n \to \kappa$ as $n \to \infty$, when the MLE exists, the estimated parameters (e.g. $\what{\eta}$) converge to a Gaussian distribution with a biased center that overestimates the magnitude of the coefficients, $\what{\eta}_j - \alpha \eta_j \to \normal{}(0, \sigma_\star^2)$ for some $\alpha > 1$ and $\sigma_\star > 0$ that depends on $\kappa$ and the signal strength $\gamma^2 = \var(\eta^\top X)$. Therefore, for a test point $x$, the prediction $g(\what{\gamma}^\top x)$ is biased towards extreme predictions closer to $0$ or $1$ than is appropriate, leading to biased and poorly calibrated predictions. \citet{ZhaoSuCa20} extend these results to arbitrary covariance matrices, and give a conjecture extending the results to predict the bias of the intercept estimated via the MLE, which is important for treatment effect estimation.

\citet{YadlowskyYuMcDA21} describe a method SLOE to estimate (as $\what{\alpha}$) the extent of the bias ($\alpha$ in the previous paragraph). Therefore, we can construct an approximately unbiased estimate of $\eta$ as $\widehat{\eta}_{\sloe{}} = \frac{1}{\what{\alpha}}\what{\eta}$ using this method.

Even if we could exactly correct for the bias in the parameter estimates, e.g. $\E[\frac{1}{\alpha}\widehat{\eta}] = \eta$, some bias comes from the remaining variance in $\what{\eta}^\top x / \alpha$ and the non-linearity of the link function $g(\cdot)$, as $\E[h(\what{\eta}^\top x / \alpha)] \not= h(\E[\what{\eta}^\top x / \alpha])$. Because the distribution of $\what{\eta}$ is known \citep{SurCa2019}, one might expect that this bias can be corrected. Using that $\frac{1}{\alpha}\what{\eta}_j - \eta_j$ is approximately Gaussian, then we have that $\what{\eta}^\top x / \alpha \approx \eta^\top x + G_x$, with $G_x \sim \normal(0, \sigma_x^2)$. So, we seek a function $\widetilde{h}$ such that $\E[\widetilde{h}(\eta^\top x + G_x)] = h(\eta^\top x),$ for each $x$. \citet{Stefanski89} shows that if such a function exists, then $z \mapsto h(z)$ must be an entire function over the complex plane. Unfortunately, we know that the logit inverse link function is not entire (for example, the denominator of $h(z)$ is zero at $z = \pi i$), so no such $\widetilde{h}$ exists. \citet{Stefanski89} shows that approximately unbiased functions sometimes exist, even when $h$ is not entire; we follow their strategy using the approximation sequence
\begin{equation*}
    \widetilde{h}_k(z) = \sum_{m=0}^k \frac{(-\sigma_x^2)^m}{2^{m}m!}h^{(2m)}(z).
\end{equation*}
The form for $k=1$ is simple and relatively accurate,
\begin{equation*}
    \widetilde{h}_1(z) = h(z) + \frac{\sigma_x^2}{2}\frac{\exp(z)(\exp(z) - 1)}{(\exp(z) + 1)^3}.
\end{equation*}
The needed parameter $\sigma_x^2$ can be approximated for each $x$ using the inferential results in \citet{SurCa2019}, or \citet{ZhaoSuCa20} for when the covariates are not isotropic.

Perhaps surprisingly, we can correct the Jensen's gap bias between $\E[1/h(\what{\eta}^\top x / \alpha )]$ and $1/h(\eta^\top x)$. Indeed, writing out $1/h(z) = 1 + \exp(-z)$ shows that this function is entire, and applying Theorem~1 from \citet{Stefanski89} shows
\begin{equation*}
    \E\left[1 + \exp\left(-\what{\eta}^\top x / \alpha - \tfrac{\sigma_x^2}{2}\right) \right] = 1 + \exp(\eta^\top x),
\end{equation*}
when $(\what{\eta} - \alpha \eta)^\top x$ is normally distributed. Therefore, we can recover unbiased estimates of the ATE using IPW with this correction.

\subsection{Tradeoff between bias and calibration of nuisance parameters}
\label{sec:logistic-dilemma}
\begin{figure}[th]
     \centering
    \caption{True coefficients (orange) and estimated coefficients according to (a) MLE, (b) post-hoc Platt-scaled, and (c) post-hoc bias correction for high dimensions, according to a well-specified logistic regression simulation with $\kappa = 0.2$ and $\gamma^2 = 5$. The coefficient estimates from the MLE are systematically too large, whereas after post-hoc Platt scaling, they are systematically too small. Bias correction with SLOE removes the systematic bias in the coefficients.}
     \begin{subfigure}[b]{0.45\textwidth}
         \centering
         \includegraphics[width=\textwidth]{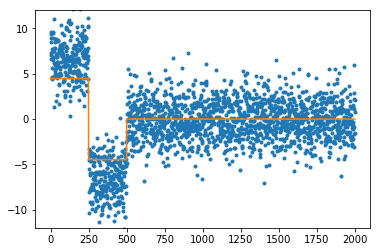}
         \caption{Unadjusted MLE estimate of parameters}
         \label{fig:mle-coefs}
     \end{subfigure}
     \hfill
     \begin{subfigure}[b]{0.45\textwidth}
         \centering
         \includegraphics[width=\textwidth]{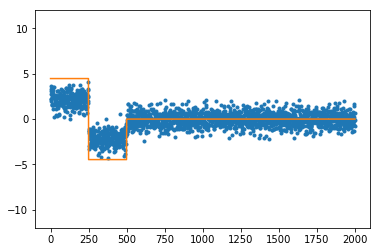}
         \caption{Post-hoc Platt scaling on 10\% of data held out}
         \label{fig:platt-coefs}
     \end{subfigure}
     \hfill
     \begin{subfigure}[b]{0.45\textwidth}
         \centering
         \includegraphics[width=\textwidth]{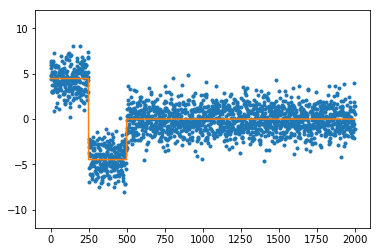}
         \caption{Bias correction for high-dimensions}
         \label{fig:unbiased-coefs}
     \end{subfigure}
    \label{fig:logistic-params}
\end{figure}

The unbiased prediction estimates described above will have statistical noise that does not disappear as $n \to \infty$. This noise effectively reduces the strength of the relationship between the predictions and outcomes, degrading the calibration of unbiased predictions on statistically independent test data. A common way to improve the calibration of learned predictions is to apply Platt scaling \cite{Platt98}; this method is especially sensible given that the true model is a logistic regression model. 

\citet{Platt98} defines the scaling in the context of using a support vector machine to estimate the coefficients, and then applying the scaling to convert them into coefficients in a logistic regression model to endow probabilistic interpretation. Here, we first estimated the coefficients with a logistic regression model, but the idea is still the same. To be concrete, we mean the following: given an estimate $\what{\eta}$ of the coefficients of the logistic regression model, and a statistically independent validation sample $\mathcal{I}_C$ of the input $X$ and output $W$ variables, $\{(X_i, W_i)\}_{i \in \mathcal{I}_C}$, compute a re-scaling of the coefficients $\what{t}$ to minimize the validation MLE,
\begin{equation*}
    \min_{t} \frac{1}{|\mathcal{I}_C|} \sum_{i \in \mathcal{I}_C} Y_i \log(h^{-1}(t \cdot \what{\eta}^\top X_i)) + (1-Y_i)\log(1 - h^{-1}(t \cdot \what{\eta}^\top X_i)).
\end{equation*}
Then, define the re-calibrated, ``Platt-scaled'' coefficients $\what{\eta}_{\text{cal}} = \what{t} \cdot \what{\eta}.$ In practice, a common approach is to estimate the coefficients using a leave-one-out cross-validation approach, where $\what{\eta}_{-i}$ is estimated from the full data, leaving out the $i$-th example, and then the recalibration scaler $\what{t}$ is estimated the solution to
\begin{equation*}
    \min_{t} \frac{1}{n} \sum_{i =1}^n Y_i \log(h^{-1}(t \cdot \what{\eta}_{-i}^\top X_i)) + (1-Y_i)\log(1 - h^{-1}(t \cdot \what{\eta}_{-i}^\top X_i)).
\end{equation*}
Using the results from \citet{SurCa2019} show that this will be consistent to the scaling estimated on a large, independent validation sample. Additionally, adapting techniques from \citet{YadlowskyYuMcDA21}, one can efficiently approximate the leave-one-out predictors and compute the re-scaling with little additional computation time or requirements.

Correcting the predictions to improve the calibration will reduce the error in the nuisance parameter estimates, so that for the recalibrated parameters $\what{\eta}_{\text{cal}}$, $\E[ (h(\eta^\top X) - h(\what{\eta}_{\text{cal}}^\top X))^2] < \E[ (h(\eta^\top X) - h(\widetilde{\eta}^\top X))^2]$. However, doing so re-introduces bias in the predictions, i.e. $\E[\what{\eta}_{\text{cal}}^\top x] \not= \eta^\top x$, which dominates the error of the downstream task of estimating the ATE. Therefore, recalibrating the predictions (either before or after applying the rescaling according to the ProbeFrontier method) will actually lead to worse ATE estimates.

\begin{figure}[ht]
    \centering
    \caption{Calibration of estimated predictions on independent and identically distributed data according to (a) MLE, (b) post-hoc Platt-scaled, and (c) post-hoc bias correction for high dimensions, according to a well-specified logistic regression simulation with $\kappa = 0.2$ and $\gamma^2 = 5$. The predictions from the MLE are significantly over-confident, while those from the bias-corrected coefficients are only mildly over-confident. Post-hoc Platt scaling properly calibrates the the \emph{predictions}, even though the \emph{coefficients} are systematically biased (see Fig.~\ref{fig:logistic-params}).}
     \begin{subfigure}[b]{0.45\textwidth}
         \centering
         \includegraphics[width=\textwidth]{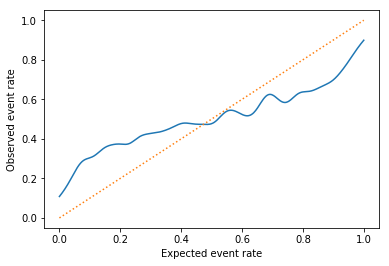}
         \caption{Unadjusted MLE estimate of parameters}
         \label{fig:mle-calib}
     \end{subfigure}
     \hfill
     \begin{subfigure}[b]{0.45\textwidth}
         \centering
         \includegraphics[width=\textwidth]{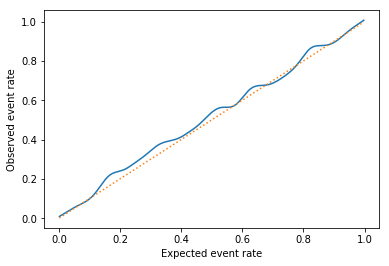}
         \caption{Post-hoc Platt scaling on 10\% of data held out}
         \label{fig:platt-calib}
     \end{subfigure}
     \hfill
     \begin{subfigure}[b]{0.45\textwidth}
         \centering
         \includegraphics[width=\textwidth]{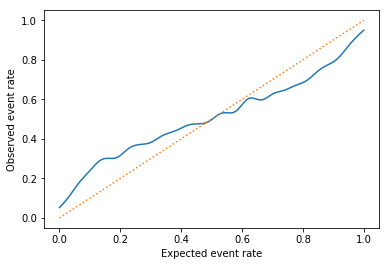}
         \caption{Bias correction for high-dimensions}
         \label{fig:unbiased-calib}
     \end{subfigure}
    \label{fig:logistic-calib}
\end{figure}

To visually understand these results, consider the results of the MLE, unbiased, and Platt-scaled (calibrated) estimates of the regression parameters from data simulated from a logistic regression model shown in Figure~\ref{fig:logistic-params}, and their calibration on an independent sample shown in Figure~\ref{fig:logistic-calib}. One can see that the MLE and Platt-scaled parameter estimates are biased, with the MLE biased towards overestimating the effect of the covariates, and the Platt-scaled estimates biased towards underestimating the effect of the covariates. However, neither the MLE or unbiased estimates lead to well-calibrated predictions, while the Platt-scaled estimates do. In this simulation, the signal strength $\gamma^2 = 5$ and dimension-to-example ratio $\kappa = 0.2$ are modest, and larger values of $\gamma^2$ or $\kappa$ would exacerbate these results even more.

\subsection{TMLE for binary outcomes}
We discussed the TMLE for continuous outcomes in Section~\ref{sec:tmle-linear} for the linear outcome regression model.
\citet[Chp. 4]{VanDerLaanRo11} compute a doubly robust estimate of the ATE for binary outcomes by estimating $\epsilon_1$ and $\epsilon_0$ via maximum likelihood regression estimates of the fluctuations $\frac{W}{\what{\pi}(X)}$ and $\frac{1-W}{1- \what{\pi}(X)}$, respectively, in a logistic regression model offset by the outcome regression. 

The nice property of TMLE with Gaussian family and identity link does not precisely translate when using the Binomial (logistic) link. However, as we see in the experiments, the link function does not make much of a difference in terms of the observed properties of the estimator.

\subsection{Experiments}
\label{sec:logistic-experiments}

The exact bias and variance of ATE estimators for the logistic regression outcome model are difficult to derive due to the nonlinearity. We saw in Section~\ref{sec:logistic-dilemma} that finding approximately unbiased estimates of the nuisance parameters is possible, and the intuition from Section~\ref{sec:linear} suggests that this will lead to good ATE estimates, with higher variance than anticipated from the asymptotic variance predicted by the standard semiparametric influence function. In this section, we show that this intuition bares out via simulations. We show that while the re-calibrated (Platt-scaled) nuisance parameter estimates are more accurate estimates of the true nuisance parameters, the unbiased estimates from rescaling as proposed by \citet{SurCa2019} lead to more accurate ATE estimates.

We used a very similar simulation setup as in Section~\ref{sec:linear-experiments}. The parameters $\beta_1,~\beta_0,$ and $\eta$ were scaled so that for each of them, $\gamma^2 = 5$. A constant offset is added to the log odds ratio to create a nontrivial average treatment effect. Note that this immediate induces heterogeneity in the treatment effect, because of the nonlinearity of the logistic inverse link function.

\begin{table}[h]
    \centering
    \begin{tabular}{llrrrrr}
\toprule
 & & & & &  \multicolumn{2}{c}{TMLE} \\
Logistic Estimator & & G-estimator &    IPW &   AIPW & Gaussian & Binomial \\
\midrule
            MLE &       Bias &      -0.013 & -0.014 & -0.004 &          -0.010 &          -0.008 \\
             &  Std. Err. &       0.012 &  0.042 &  0.029 &           0.015 &           0.025 \\
          Platt &       Bias &       0.144 &  0.079 &  0.052 &           0.052 &           0.050 \\
           &  Std. Err. &       0.012 &  0.012 &  0.012 &           0.012 &           0.012 \\
           SLOE &       Bias &      -0.004 & -0.000 & -0.000 &          -0.001 &          -0.001 \\
            &  Std. Err. &       0.012 &  0.019 &  0.018 &           0.015 &           0.015 \\
         Propensity Oracle &       Bias &      -0.004 &  0.001 &  0.000 &           0.000 &           0.001 \\
          &  Std. Err. &       0.012 &  0.020 &  0.014 &           0.014 &           0.014 \\
          Oracle &       Bias &         -- &    -- &  0.000 &             -- &             -- \\
          &  Std. Err. &         -- &    -- &  0.012 &             -- &             -- \\
\bottomrule
\end{tabular}
    \caption{Comparing estimators for ATE among the two choices for re-calibration of logistic regression models. In these simulations, $n = 8000$ and $p = 640$ so that $\kappa = 0.08$.}
    \label{tab:logistic-ate}
\end{table}

The results in Table~\ref{tab:logistic-ate} compare three estimators for the outcome models: the MLE, the MLE with post-hoc recalibration using Platt-scaling, as discussed in Section~\ref{sec:logistic-dilemma}, and the MLE rescaled using SLOE. For reference, we also include the estimates where the propensity score is known (with the outcome regression model estimated with SLOE) and the AIPW with oracle knowledge of the nuisance parameters.

\section{Discussion}
We studied, via theoretical investigation and simulation experiments, the variance of a variety of ATE estimators in a high dimensional regime where the nuisance parameters have non-vanishing variance, and therefore are inconsistent. We found that the effect of this is to inflate the variance, at the $\sqrt{n}$ scale, of the ATE estimates, so that none of the estimators achieve the semiparametric efficiency bound. We conjecture that because the estimated nuisance parameters will always have non-vanishing variance in this setting, that no estimator can achieve the semiparametric efficiency bound of \citet{hahn98}. Despite precisely characterizing the asymptotic variance of the estimators, we do not provide results for the asymptotic distribution of the estimators. The challenge here is that in high dimensions, asymptotic linearity of various parts of the estimators will not hold, preventing simple derivation of such results. Deriving such results is an interesting future direction, and may be possible with a leave-one-out analysis analogous to those in \citet{ElKarouiBeBiLiYu2013,LeiBiElKa18,SurCa2019}. 

Double robustness and large sample asymptotic linearity with the semiparametric efficient influence function does not provide guarantees about the variance of the estimator in this setting. Of the doubly-robust estimators that we tried, none have variance similar to the G-estimator. It's unclear, at this point, whether that is fundamental, or what the minimum variance doubly robust estimator would be.

We note that we have been imprecise with defining $\kappa$ with respect to our sample-splitting procedures. In all of our derivations, we assumed that the sample was split in two equal pieces and that $\kappa$ referred to the aspect ratio within each split. The variance of the nuisance parameter estimates, as well as the final ATE estimate, depend on $\kappa$ and therefore depend on the size of each split of the data. It's unclear in this high-dimensional regime whether cross-fitting, as discussed in \citet{ChernozhukovChDeDuHaNeRo16}, would affect the statistical properties of the estimator, as the high dimensionality could lead to nontrivial correlations between the splits. Beyond that, using more folds would reduce the effective $\kappa$ when fitting nuisance parameters, and therefore requires careful future consideration.

Finally, we end with noting that there are many connections between the doubly-robust methods described here and those that appear in off-policy policy evaluation in reinforcement learning (RL). Additionally, the way that predictions of future counterfactuals are plugged in to the parametric G-computation described here is similar to sequential G-computation methods and fitted Q-evaluation (FQE)  \citep{LeVoYu19}. In large scale empirical evaluations, \citet{VoloshinLeJiYu19} showed that FQE is more accurate at estimating a policy's value in high dimensional, long horizon regimes. This matches the intuition developed in this paper, and suggests that theory and lower variance methods developed for ATE estimation in the high dimensional regime may yield techniques applicable to RL, as well.

\section{Proofs}
\subsection{Proof of Proposition~\ref{prop:prediction-variance}}
\label{sec:prediction-variance-proof}
Standard results for linear regression show that so long as $N_{1w} \ge d$, \smash{$\var(\what{\beta}_1 \mid \{X_i, W_i\}) =  \sigma^2 \what{\Sigma}_w^{-1} / N_{1w}$}. Therefore, 
\begin{equation*}
    \var(x^\top \what{\beta}_1 \mid G) = \E\left[ \frac{1}{N_{1i}} x^\top \what{\Sigma}_1^{-1}
 x \mid G\right ] \ge \frac{\|x\|_2^2}{N_{1i}} \E[  \lambda_{\min}(\what{\Sigma}_1^{-1}) \mid G].
\end{equation*}

Because $d = \kappa n$, the assumption that $\|x\| \asymp \sqrt{d}$ implies that $\liminf_{n \to \infty} \|x\|_{2}^2 / N_{1i} > 0$, so it suffices to show that $\liminf_{n \to \infty} \E[  \lambda_{\min}(\what{\Sigma}_1^{-1}) \mid G] > 0$. Assuming the matrix inverse exists (it will almost surely on $G$), the properties of the inverse imply that $\lambda_{\min}(\what{\Sigma}_1^{-1}) = 1/\lambda_{\max}(\what{\Sigma}_1)$. Therefore, it suffices to show that $\liminf_{n\to\infty} P(0 < \lambda_{\max}(\what{\Sigma}_1)) > 0$.

For any fixed $N_1$, $\what{\Sigma}_1$ is the empirical covariance matrix of $N_1$ random vectors drawn from $X_i \mid W_i=1$, which are sub-Gaussian (proved below). These properties, along with a result from random matrix theory described in \citet{Vershynin12}, imply the following lemma, which bounds \smash{$\E[  \lambda_{\min}(\what{\Sigma}_1^{-1}) \mid G]$} from below.

\begin{lemma}
Under the assumptions of Proposition~\ref{prop:prediction-variance}, there exists finite $C, c > 0$ such that
\begin{equation*}
    \E[ \lambda_{\min}(\what{\Sigma}_1^{-1}) \mid N_1] \ge 
    \frac{1}{2\lambda_{\max}(\Sigma) + 2\left(\max\left\{\sqrt{\frac{\log(4)}{cN_1}} + C\sqrt{\frac{d}{N_1}}, 1\right\}\right)^2}
\end{equation*}
\label{lem:matrix-bound}
\end{lemma}

\noindent Proof of this lemma is deferred to Section~\ref{sec:proof-lem-matrix-bound}. Applying this result, and recalling that under $G$, $d < N_1 \le n$, we have that $\liminf_{n\to\infty} \E[\lambda_{\min}(\what{\Sigma}_1^{-1}) \mid G] \ge 1/(2(M +\kappa C^2))$.

\subsection{Proof of Lemma~\ref{lem:matrix-bound}}
\label{sec:proof-lem-matrix-bound}

Because $\what{\Sigma}_1$ is non-negative semi-definite,
$\lambda_{\min}(\what{\Sigma}_{1}^{-1}) = \frac{1}{\lambda_{\max}(\what{\Sigma}_1)}$.
Because this is a non-negative quantity, we can rewrite it's expectation as follows:
\begin{align*}
    \E\left[ \frac{1}{\lambda_{\max}(\what{\Sigma}_1)} \right] &= \int_0^\infty P\left(\frac{1}{\lambda_{\max}(\what{\Sigma}_1)} > t\right) \dif{t}
    = \int_0^\infty P\left(\frac{1}{t} > \lambda_{\max}(\what{\Sigma}_1) \right) \dif{t}
    \\
    &= \int_0^\infty P\left( \lambda_{\max}(\what{\Sigma}_1) < s\right) \frac{1}{s^2} \dif{s},
\intertext{
where the last line follows from the change of variables $s = 1/t$. Noting that $\|\what{\Sigma}_1-\Sigma_1\|_{\mathrm{op}} < s - \lambda_{\max}(\Sigma_1)$ implies $\lambda_{\max}(\what{\Sigma}_1) < s$, we can bound the above from below as}
    &\ge \int_{0}^\infty P\left( \|\what{\Sigma}_1-\Sigma_1\|_{\mathrm{op}} < s - \lambda_{\max}(\Sigma_1) \right) \frac{1}{s^2} \dif{s}
    \\
    &\ge \int_{\lambda_{\max}(\Sigma_1)+1}^\infty P\left( \|\what{\Sigma}_1-\Sigma_1\|_{\mathrm{op}} < s - \lambda_{\max}(\Sigma_1) \right) \frac{1}{s^2} \dif{s}.
    \intertext{Again, changing variables with $u^2 + \lambda_{\max(\Sigma_1)} = s$,}
    &=\int_{1}^\infty P\left( \|\what{\Sigma}_1-\Sigma_1\|_{\mathrm{op}} < u^2 \right) \frac{2u}{(u+\lambda_{\max(\Sigma_1)})^2} \dif{u}.
\end{align*}

Because $X_i$ is Gaussian, and $P(W_i) = 0.5$, the moment condition in Lemma 5.5 from \citet{Vershynin12} implies that conditional on $W_i=1$, $X_i$ are sub-Gaussian. Therefore, \citet[Theorem 5.39]{Vershynin12} implies that the eigenvalues of $\what{\Sigma}_1$ satisfy
\begin{equation*}
\|\what{\Sigma}_1 - \Sigma_1\|_{\mathrm{op}} \le \max\{\delta, \delta^2\}
\end{equation*}
with probability at least $1 - 2\exp(-ct^2)$, 
where $\delta = C \sqrt{\frac{d}{N_1}} + \frac{t}{\sqrt{N_1}}$. Because $u>1$, and for any $\delta\ge 1$, $\delta<\delta^2$, we have $\max\{\delta, \delta^2\} = u^2$ is only satisfied by $\delta=u$. Substituting the expression for $\delta$ and solving for $t$ gives
\begin{equation*}
    p(u) \defeq P\left( \|\what{\Sigma}_1-\Sigma_1\|_{\mathrm{op}} < u^2 \right) \ge 1 - 2\exp\left(-cN_1\left(u-C\sqrt{\frac{d}{N_1}}\right)\right)
\end{equation*}
When $u \ge \sqrt{\frac{\log(4)}{cN_1}} + C\sqrt{\frac{d}{N_1}}$, the above implies that $ p(u) \ge 1/2$. Therefore, truncating the limits of the integral again, we have
\begin{equation*}
    \int_{1}^\infty P\left( \|\what{\Sigma}_1-\Sigma_1\|_{\mathrm{op}} < u^2 \right) \frac{2u}{(u+\lambda_{\max(\Sigma_1)})^2} \dif{u} \ge \int_{\max\left\{\sqrt{\frac{\log(4)}{cN_1}} + C\sqrt{\frac{d}{N_1}},1\right\}}^\infty \frac{u}{(u+\lambda_{\max(\Sigma_1)})^2} \dif{u}.
\end{equation*}
This integral is straightforward to evaluate, giving the claimed result:
\begin{equation*}
    \int_{\max\left\{\sqrt{\frac{\log(4)}{cN_1}} + C\sqrt{\frac{d}{N_1}},1\right\}}^\infty \frac{u}{(u+\lambda_{\max(\Sigma_1)})^2} \dif{u} = \frac{1}{2\lambda_{\max}(\Sigma) + 2\left(\max\left\{\sqrt{\frac{\log(4)}{cN_1}} + C\sqrt{\frac{d}{N_1}},1\right\}\right)^2}.
\end{equation*}
Because $X$ is isotropic before conditioning on $W=1$, we know $\lambda_{\max}(\Sigma_1) \le \max\{2 \var( (\eta^\top X) h^{-1}(\eta^\top X)), 1\} < \infty$.

\subsection{Proof of Theorem~\ref{thm:linear-moments}}
\label{sec:linear-moments-proof}
For notational convenience, let $\overline{X} = \frac{1}{n_2}\sum_{i \in \mathcal{I}_2} X_i$.
Plugging \eqref{eq:treated-structural} and \eqref{eq:control-structural} into the equation for $\what\beta_w$ gives
\begin{align*}
    \E[\what{\beta}_w \mid G_n ] = \E[\E[\what{\beta}_w \mid \{X_i, W_i\}_{i=1}^n]] &= \E\left[\left( \what{\Sigma}_w \right)^{-1} \sum_{i=1}^n \ind{W_i=w} X_i \E[Y(w) \mid X=X_i] / N_{1w} \mid G_n \right]
    \\
    &= \E\left[\left(N_{1w} \what{\Sigma}_w \right)^{-1} \what{\Sigma}_w \beta_w \mid G_n \right] = \beta_w,
\end{align*}
so long as $\what{\Sigma}_w$ is invertible on $G_n$.
By the sample splitting procedure, $\what{\beta}_w$ and $G_n$ are independent of $\overline{X}$, so,
\begin{equation*}
    \E[\what\theta \mid G_n ] = \E[ \what{\beta}_1^\top \overline{X} - \what{\beta}_0^\top \overline{X}] = (\beta_1 - \beta_0)^\top \E[X] = \theta.
\end{equation*}
The variance follows from a simple application of the law of total variance, conditioning on $\overline{X}$, and some matrix manipulation:
\begin{align*}
    \var\left(\what\theta \right) &= \E\left[\var\left(\what\theta \mid \overline{X} \right)\right] + \var\left(\E\left[\what\theta \mid \overline{X} \right]\right)
    \\
    &= \E\left[\overline{X}^\top \var\left(\what\beta_1 - \what\beta_0 \right) \overline{X} \right] + \var\left((\beta_1 - \beta_0)^\top \overline{X} \right)
    \\
    &= \E\left[(\overline{X}^\top - \E[X]) \var\left(\what\beta_1 - \what\beta_0 \right) (\overline{X}^\top - \E[X]) \right] + \E[X] \var\left(\what\beta_1 - \what\beta_0 \right) \E[X] \\
    &~\quad~\quad + (\beta_1 - \beta_0)^\top \var\left(\overline{X} \right) (\beta_1 - \beta_0)
    \\
    &= \trace{}\left( \var\left(\what\beta_1 - \what\beta_0 \right)\Sigma \right) / n_2 + \E[X] \var\left(\what\beta_1 - \what\beta_0 \right) \E[X] \\
    &~\quad~\quad + (\beta_1 - \beta_0)^\top \Sigma (\beta_1 - \beta_0) / n_2,
\end{align*}
where the last equality follows from the linearity of the trace and expectation operators, and the cyclical property of the trace.

The remainder of the proof is to show that the additional conditions allow us to bound $\var( \what{\beta}_1 - \what{\beta}_0 \mid G_n )$ in high dimensions. If we can show that the maximum eigenvalue is bounded above, it implies that $\var(\what{\theta} \mid G_n) \le C/n$, for an appropriate constant $C<\infty$. Therefore, according to Chebyshev's inequality $\sqrt{n}(\what{\theta} - \theta) = O_P(1)$.

To do this, let $0 < \overline{p} < 1/2$ and
\begin{equation*}
    G_n = \{ \lambda_{\max}(\what{\Sigma}_1^{-1}) \le C_G,  \lambda_{\max}(\what{\Sigma}_0^{-1}) \le C_G, N_{1w} > \overline{p}n \},
\end{equation*}
for $C_G < \infty$ defined in Assumption~\ref{assume:least-squares-good}. This choice of $G_n$ ensures that $\what{\Sigma}_w$ is invertible, as needed above, and satisfies the conditions of the theorem. Standard results for linear regression with more observations than the dimension say
\begin{align*}
    \var\left( \what{\beta}_w  \mid \{X_i, W_i\}_{i=1}^n \right) &= \left( \sum_{i=1}^n \ind{W_i = w} X_iX_i^\top \right)^{-1} \left( \sum_{i=1}^n \ind{W_i = w} X_iX_i^\top \var(Y(w) \mid X=X_i) \right)\\
    &~\quad~\quad~\quad~\quad~\quad~\quad~\quad~\left( \sum_{i=1}^n \ind{W_i = w} X_iX_i^\top \right)^{-1}
    \\
    &\preceq \sigma_{\max}^2 \left( \sum_{i=1}^n \ind{W_i = w} X_iX_i^\top \right)^{-1}
\end{align*}
and that $\hat\beta_1$ and $\hat\beta_0$ are uncorrelated. Noting that on $G_n$, 
\begin{equation*}
    \left( \sum_{i=1}^n \ind{W_i = w} X_iX_i^\top \right)^{-1} \preceq \frac{C_G}{\overline{p}n} I_d,
\end{equation*}
we immediately can bound the variance as
\begin{equation*}
    \var\left( \what{\beta}_1 - \what{\beta}_0 \mid G_n\right) \preceq \frac{C_v}{ n} I_d,
\end{equation*}
where $C_v \defeq 2\sigma_{\max}^2 C_G/\overline{p} < \infty$.

\subsection{Proof of Proposition~\ref{prop:gaussian-nuisance-exact}}
\label{sec:proof-prop-gaussian-nuisance-exact}
The first equality of the proposition, that
\begin{equation*}
    \var\left( \what{\beta}_w \mid N_{1w}\right) = \sigma^2 \E\left[ \left( \sum_{i=1}^n \ind{W_i = w} X_iX_i^\top \right)^{-1} \mid N_{1w} \right],
\end{equation*}
is a standard result for homoskedastic linear regression with more observations than the dimension. Showing the second equality is more involved.

Due to the fact that the observations $i=1,\dots,n$ are all assumed to be i.i.d., we can assume, without loss of generality, that $W_i=w$ for observations $i=1, \dots, N_{1w}$ (otherwise, we could re-number the examples so that this holds). Then,
\begin{equation*}
    \E\left[ \left( \sum_{i=1}^n \ind{W_i = w} X_iX_i^\top \right)^{-1} \mid N_{1w} \right] = \E\left[\left( \sum_{i=1}^{N_{1w}} X_iX_i^\top \right)^{-1} \mid W_1=\dots=W_{N_{1w}}=w, N_{1w} \right].
\end{equation*}
From Bayes' rule, the conditional pdf of $X_i$ given $W_i = w$ is
\begin{equation*}
    p(X_i \mid W_i=w) = \frac{P(W_i = w \mid X_i)p(X_i)}{P(W_i=w)} = 2 h^{-1}(\eta^\top X_i) p(X_i),
\end{equation*}
and so
\begin{equation*}
    \E\left[\left( \sum_{i=1}^{N_{1w}} X_iX_i^\top \right)^{-1} \mid W_1=\dots=W_{N_{1w}}=w, N_{1w} \right] = \E\left[2^n \prod_{i=1}^{N_{1w}}h^{-1}(\eta^\top X_i) \left( \sum_{i=1}^{N_{1w}} X_iX_i^\top \right)^{-1} \mid N_{1w} \right].
\end{equation*}
Recall that because $h$ is the logistic link function, $h^{-1}(\eta^\top x) = 1-h^{-1}(-\eta^\top x)$. This symmetry allows us to derive the expectation on the RHS using standard results for the inverse Wishart distribution, according to the following lemma, whose proof is deferred to below.

\begin{lemma}
\label{lem:symmetry-wishart}
Let $f(X)$ be a function with the symmetry property $
    f(X) = 1 - f(-X),$ and let $(X_i)_{i=1}^n$ be a jointly Gaussian array of $d$-dimensional multivariate Gaussians, each distributed as $X_i \sim \normal{}(0, \Sigma)$.
Then, 
\begin{equation}
    \E\left[ \prod_{i=1}^n f(X_i) \left(\sum_{i=1}^n X_iX_i^\top\right)^{-1}\right] = \E\left[ 2^{-n} \left(\sum_{i=1}^n X_iX_i^\top \right)^{-1} \right].
    \label{eq:integrand-symmetry}
\end{equation}
\end{lemma}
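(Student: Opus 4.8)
The plan is to prove \eqref{eq:integrand-symmetry} by a sign-flip symmetrization over the $2^n$ vertices of $\{-1,+1\}^n$. Write $A \defeq \sum_{i=1}^n X_iX_i^\top$. For $n\ge d$ this matrix is almost surely invertible, and provided $n$ is large enough that $\E\|A^{-1}\|<\infty$ (the regime relevant to Proposition~\ref{prop:gaussian-nuisance-exact}), both sides of \eqref{eq:integrand-symmetry} are well-defined; since in the application $f = h^{-1}(\eta^\top\cdot)\in[0,1]$ we have $0\preceq \prod_{i=1}^n f(s_iX_i)\,A^{-1}\preceq A^{-1}$ for every sign pattern $s$, so all the expectations below are finite. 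The crucial structural fact is that for any fixed $s=(s_1,\dots,s_n)\in\{-1,+1\}^n$, the substitution $(X_1,\dots,X_n)\mapsto(s_1X_1,\dots,s_nX_n)$ leaves $A$ unchanged (because $s_i^2=1$) and, since the $X_i$ are independent mean-zero Gaussians, leaves the joint law of $(X_1,\dots,X_n)$ unchanged.

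First I would record the immediate consequence of this invariance: for every $s\in\{-1,+1\}^n$,
\begin{equation*}
\E\Big[\prod_{i=1}^n f(X_i)\,A^{-1}\Big]=\E\Big[\prod_{i=1}^n f(s_iX_i)\,A^{-1}\Big].
\end{equation*}
Averaging this identity over the $2^n$ choices of $s$ and moving the finite sum inside the expectation (justified by integrability) gives
\begin{equation*}
\E\Big[\prod_{i=1}^n f(X_i)\,A^{-1}\Big]=\E\Big[\frac{1}{2^n}\Big(\sum_{s\in\{-1,+1\}^n}\prod_{i=1}^n f(s_iX_i)\Big)A^{-1}\Big].
\end{equation*}

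Next I would evaluate the bracketed sum pointwise in $(X_1,\dots,X_n)$. By the distributive law,
\begin{equation*}
\sum_{s\in\{-1,+1\}^n}\prod_{i=1}^n f(s_iX_i)=\prod_{i=1}^n\big(f(X_i)+f(-X_i)\big),
\end{equation*}
and the hypothesis $f(X)=1-f(-X)$ makes each factor equal to $1$, so the whole sum is identically equal to $1$. Substituting this back yields $\E[\prod_{i=1}^n f(X_i)\,A^{-1}]=\E[2^{-n}A^{-1}]$, which is exactly \eqref{eq:integrand-symmetry}.

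The one point I would be most careful about is the distributional invariance under \emph{per-coordinate} sign flips: this genuinely uses that the $X_i$ are independent (each being mean-zero Gaussian then gives $X_i$ and $-X_i$ the same law), not merely that $(X_1,\dots,X_n)$ is jointly Gaussian with the stated marginals—with correlated coordinates the identity can fail. Everything else is bookkeeping: verifying $n\ge d$ so $A^{-1}$ exists almost surely, and enough integrability (finiteness of $\E\|A^{-1}\|$, which holds once $n$ is sufficiently larger than $d$) to interchange the finite sum with the expectation. Both hold in the setting where the lemma is invoked in Proposition~\ref{prop:gaussian-nuisance-exact}.
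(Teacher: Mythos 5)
Your proof is correct and is essentially the paper's argument: the average over the $2^n$ sign patterns is exactly the paper's symmetrized density $\prod_i\tfrac{p_\Sigma(x_i)+p_\Sigma(-x_i)}{2}$ written probabilistically, with the identity $f(x)+f(-x)=1$ collapsing the product in both versions. Your explicit remarks on integrability and on the need for independence (which the paper's product-density step uses implicitly) are welcome but do not change the approach.
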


This lemma implies that
\begin{equation*}
    \E\left[2^n \prod_{i=1}^{N_{1w}}h^{-1}(\eta^\top X_i) \left( \sum_{i=1}^{N_{1w}} X_iX_i^\top \right)^{-1} \mid N_{1w} \right] = \E\left[ \left( \sum_{i=1}^{N_{1w}} X_iX_i^\top \right)^{-1} \mid N_{1w} \right] = \E[ V^{-1} \mid N_{1w} ],
\end{equation*}
where $V\sim \mathcal{W}(\Sigma, N_{1w})$. The expectation of an inverse Wishart distribution is known,
\begin{equation*}
    \E[ V^{-1} \mid N_{1w} ] = \frac{\Sigma^{-1}}{N_{1w} - d -1},
\end{equation*}
which gives the main result,
\begin{equation*}
    \E\left[ \left( \sum_{i=1}^n \ind{W_i = w} X_iX_i^\top \right)^{-1} \mid N_{1w} \right] = \frac{\Sigma^{-1}}{N_{1w} - d -1}.
\end{equation*}

\begin{proof-of-lemma}[\ref{lem:symmetry-wishart}]
The probability distribution function (pdf) of a centered Gaussian satisfies $p_{\Sigma}(x) = p_{\Sigma}(-x)$. Therefore, for any (measurable) function $g : \R^d \to \R$,
\begin{equation*}
    \E[ g(X_i) ] = \int g(x) p(x) \dif{x} = \int g(x) \frac{(p_{\Sigma}(x) + p_{\Sigma}(-x))}{2} \dif{x}.
\end{equation*}
The key insight is that
\begin{align*}
    f(x_i)\frac{p_{\Sigma}(x_i)+p_{\Sigma}(-x_i)}{2} &= \frac{f(x_i) p_{\Sigma}(x_i)+(1-f(-x_i)) p_{\Sigma}(-x_i)}{2},
\end{align*}
and $(-x_i)(-x_i)^\top = x_ix_i^\top$, so applying the change of variables $x_i = -x_i$ in the following integral does not change the integrand:
\begin{align*}
    \E\left[ \prod_{i=1}^n f(X_i) \left(\sum_{i=1}^n X_iX_i^\top\right)^{-1}\right] &= \int \dots \int \prod_{i=1}^n f(x_i) \left(\sum_{i=1}^n x_ix_i^\top\right)^{-1} \prod_{i=1}^n p_{\Sigma}(x_i) \dif{x_1} \dots \dif{x_n} \\
    & = \int \dots \int \prod_{i=1}^n f(x_i)\frac{p_{\Sigma}(x_i)+p_{\Sigma}(-x_i)}{2} \left(\sum_{i=1}^n x_ix_i^\top\right)^{-1}  \dif{x_1} \dots \dif{x_n}
    \\
    & = \int \dots \int \prod_{i=1}^n \frac{f(x_i)p_{\Sigma}(x_i)+(1-f(x_i))p_{\Sigma}(x_i)}{2} \left(\sum_{i=1}^n x_ix_i^\top\right)^{-1}  \dif{x_1} \dots \dif{x_n}
    \\
    & = \int \dots \int \prod_{i=1}^n \frac{p_{\Sigma}(x_i)}{2} \left(\sum_{i=1}^n x_ix_i^\top\right)^{-1}  \dif{x_1} \dots \dif{x_n}.
\end{align*}
The last expression is exactly the integral $ = \E[ 2^{-n} (\sum_{i=1}^n X_iX_i^\top)^{-1}]$.
\end{proof-of-lemma}

\subsection{Proof of Theorem~\ref{thm:aipw-real-vs-oracle}}
\label{sec:aipw-real-vs-oracle-proof}
Let $G_n$ be defined as in Theorem~\ref{thm:linear-moments}, which ensures that the estimators $\what{\beta}_w$ are well-defined. All statements that follow in this proof will implicitly be conditional on $G_n$. By the law of total variance, the variance of the remainder term is
\begin{align*}
    \var\left(\sqrt{n}(\what{\theta}_{\oracle{}} - \what{\theta}_{\aipw{}}) \right)
    &=  \E\left[\var\left(\sqrt{n}(\what{\theta}_{\oracle{}} - \what{\theta}_{\aipw{}}) \mid \what{\beta} \right)\right]+ \\
    &~~~~~~ \var\left(\E\left[\sqrt{n}(\what{\theta}_{\oracle{}} - \what{\theta}_{\aipw{}}) \mid \what{\beta} \right]\right)
\end{align*}
Using $Z^{1}_i = \frac{\pi(X_i) - W_i}{\pi(X_i)} X_i$, $Z^{0}_i = \frac{\pi(X_i) - W_i}{1-\pi(X_i)} X_i$, and $\Sigma_{Z^w} = \var(Z^w)$, the conditional variance is simply 
\begin{align*}
 \var\left(\sqrt{n}(\what{\theta}_{\oracle{}} - \what{\theta}_{\aipw{}}) \mid \what{\beta} \right) &= 
    \var\left( \frac{1}{\sqrt{n}} \sum_{i=1}^n (\what{\beta}_1 - \beta_1)^\top Z^1_i + (\what{\beta}_0 - \beta_0)^\top Z^0_i \mid \what{\beta}  \right) \\
    &=
    \left(\what{\beta}_1 - \beta_1\right)^\top \Sigma_{Z^1} \left(\what{\beta}_1 - \beta_1\right) + \left(\what{\beta}_0 - \beta_0\right)^\top \Sigma_{Z^0} \left(\what{\beta}_0 - \beta_0\right).
\end{align*}
The conditional expectation $\E\left[\sqrt{n}(\what{\theta}_{\oracle{}} - \what{\theta}_{\aipw{}}) \mid \what{\beta} \right] = 0$, by the tower law and the fact that $\E[\pi(x) - W \mid X=x] = 0$. Putting these pieces together,
\begin{equation}
    \var(\sqrt{n}(\what{\theta}_{\oracle{}} - \what{\theta}_{\aipw{}})) = \operatorname{trace}(\var(\what{\beta}_1) \Sigma_{Z^1}) + \operatorname{trace}(\var(\what{\beta}_0)  \Sigma_{Z^0}).
\end{equation}
The term $\what{\theta}_{\aipw{}} - \what{\theta}_{\oracle{}}$ is uncorrelated with $\what{\theta}_{\oracle{}}$ due to the sample splitting procedure, which ensures that the $\what{\beta}_w - \beta_w$ are independent of the data averaged to estimate $\theta$.

\subsection*{Acknowledgements}
Many thanks to Alexander D'Amour, Victor Veitch, and Avi Feller, whose suggestions and conversations with me have greatly improved the paper.

\bibliographystyle{abbrvnat}
\bibliography{bib}

\end{document}